 \let\mathscr\relax
\newcommand{\suchthat}{\;\ifnum\currentgrouptype=16 \middle\fi|\;}
\def\epsilon{\varepsilon}
\renewcommand{\bar}{\overline}
\newcommand{\myheader}[1]{\smallskip\noindent\textbf{#1.}~~}
\begin{document}

\title{Fairness and Efficiency in DAG-based Cryptocurrencies} 

\author{Georgios Birmpas \inst{1} \and Elias Koutsoupias \inst{1} \and Philip Lazos \inst{2} \and \\ Francisco J. Marmolejo-Coss\'io \inst{1}}

\institute{
University of Oxford\\
\and
Sapienza University of Rome}

\maketitle

\begin{abstract}

 Bitcoin is a decentralised digital currency that serves as an alternative to existing transaction systems based on an external central authority for security. Although Bitcoin has many desirable properties, one of its fundamental shortcomings is its inability to process transactions at high rates. To address this challenge, many subsequent protocols either modify the rules of block acceptance (longest chain rule) and reward, or alter the graphical structure of the public ledger from a tree to a directed acyclic graph (DAG). 
 
 Motivated by these approaches, we introduce a new general framework that captures ledger growth for a large class of DAG-based implementations. With this in hand, and by assuming \emph{honest} miner behaviour, we (experimentally) explore how different DAG-based protocols perform in terms of \emph{fairness}, i.e., if the block reward of a miner is proportional to their hash power, as well as \emph{efficiency}, i.e. what proportion of user transactions a ledger deems valid after a certain length of time. 
 
 Our results demonstrate fundamental structural limits on how well DAG-based ledger protocols cope with a high transaction load. More specifically, we show that even in a scenario where every miner on the system is honest in terms of when they publish blocks, what they point to, and what transactions each block contains, fairness and efficiency of the ledger can break down at specific hash rates if miners have differing levels of connectivity to the P2P network sustaining the protocol.

\end{abstract}

\section{Introduction}

Bitcoin and many other decentralised digital currencies maintain a public ledger via distributed consensus algorithms implemented using blockchain data structures. End users of the currency post transactions to the P2P network sustaining the protocol and said transactions are bundled into blocks by {\em miners}: agents tasked with the upkeep of the ledger. With respect to Bitcoin, the prescribed \emph{longest chain rule} dictates that miners must bundle pending transactions into a block that also includes a single hash pointer to the end of the longest chain seen by the miner in their local view of the ledger. Furthermore, in order for a block to be valid, its hash must lie below a dynamically adjusted threshold. Hence, miners must expend computational resources to find valid blocks. Due to this \emph{Proof-of-Work} structure, if all miners follow the protocol, the number of blocks they contribute to the blockchain is proportional to the computational resources they dedicate to the protocol, i.e. their hash power. In addition, miners are incentivised to follow the protocol via judicious incentive engineering through block rewards. This latter point also implies that miners earn block reward proportional to their hash power, thus making Bitcoin a {\em fair} protocol for miners.

As mentioned before, Bitcoin dynamically adjusts its target hash for valid blocks so that the totality of all miners active in the protocol find a block every ten minutes on average. This feature of the protocol makes consensus more robust, as this time-scale is much larger than the time it takes for a block to propagate on the P2P network supporting Bitcoin. However, since the size of blocks is limited, Bitcoin inherently suffers from a scalability problem. Thus in spite of Bitcoin being strategy-proof and fair, it suffers in its {\em efficiency}: which we define as the expected ratio of the number of valid transactions in the ledger to the number of all transactions posted in the P2P network. On the other hand, simply decreasing confirmation times and demanding higher transaction throughput by either increasing the overall block creation rate or block size can also affect these very properties of the protocol. For instance, delays in the P2P network may cause miners to have different views of the ledger, which can in turn directly make achieving a consensus more difficult, or lead miners to be strategic when they would have otherwise acted honestly. Ultimately, it seems that Bitcoin fundamentally strikes a delicate balance between being strategy-proof and fair at the cost of efficiency.

There have been many attempts to cope with Bitcoin's inherent throughput limitations, with \cite{LewenbergSZ15,GHOST,PHANTOM,SPECTRE} being some notable examples. All of these papers focus on how security can be maintained when the throughput is increased and follow the common direction of either modifying Bitcoin's longest chain rule or implementing a different graphical structure underlying the ledger.

In GHOST \cite{GHOST} an alternative consensus rule is proposed to the longest chain of Bitcoin, focusing on creating a new protocol that maintains security guarantees even when faced with high transaction loads. In this setting, GHOST takes into account the fact that forks are more likely to be produced when the underlying ledger still takes the form of a tree, as with Bitcoin. More specifically, when deciding what a newly mined block should point to, GHOST no longer myopically points to the head of the longest chain, but rather starts from the genesis block and at each fork, chooses the branch of the fork that leads to the heaviest subtree in the ledger until reaching a leaf to point to. In this way, blocks that are off the main chain can still contribute to the final consensus, which arguably maintains a degree of robustness to strategic mining while coping with high throughput better than Bitcoin.

In \cite{SPECTRE, PHANTOM} protocols SPECTRE and PHANTOM are proposed, with ledger structures in the form of directed acyclic graphs (DAG). The protocols in both of these implementations suggest that every newly created block has to point to every available (visible) leaf in the ledger. In that way every created block will eventually become part of the consensus, and the security of the system remains affected by forks that will be produced due to high throughput, since they will in turn be part of the ledger. The obvious advantage is that the system becomes immune to attacks that focus on increasing the block rewards of a miner. On the other hand, ordering the transactions and preventing other types of strategic behaviour 
becomes more complicated.

Motivated by these ideas, we design a new theoretical framework that captures a large family of DAG-based ledger implementations (including those mentioned in previous paragraphs). We achieve this by introducing a parametric model which lets us adjust the number of blocks each newly created block can point to, the block attributes a miner takes into account when choosing what blocks to point to, and the number of transactions a block can store. Most importantly, we describe a theoretical framework for ledger growth in these DAG-based models, along with a novel simplification for extrapolating valid transactions from a ledger under the assumption that all miners are honest. With this in hand, we are able to answer how our family of DAG-based ledgers copes with the high transaction loads they are intended to tackle. Indeed, our results are structural in nature, for we show how fairness and efficiency suffer from high transaction rates in spite of all agents behaving honestly in a given DAG-based ledger.

\myheader{Our Results} Our simulations allow us to show specific transaction load regimes that break down efficiency for all protocols in our large class of DAG-based ledgers. Furthermore, we show that in almost all transaction load regimes, fairness is not only lost for our DAG-based ledgers, but exhibits a complicated relationship with respect to agent connectivity to the underlying P2P network. It seems that increasing the outdegree of the blockchain's nodes from 1 to 2 helps substantially, while further improvements in efficiency are marginal and mostly apparent if the throughput is pushed very high, where fairness is inherently limited and the number of pointers required would fill up most of the block.

\subsection{Related Work}
Bitcoin was introduced in Nakamoto's landmark white paper \cite{nakamoto2008bitcoin} as a decentralised digital currency. Since its inception many researchers have studied several aspects of the protocol, i.e. its security and susceptibility to different types of attacks \cite{BabaioffDOZ12,EyalS14,Eyal15,GarayKL15,liu2018strategy,MBSK19,NKMS16}, how it behaves under a game-theoretic perspective \cite{carlsten2016instability,kiayias2016blockchain,LazosKOS19} and how its scalability and inherent transaction throughput issues can be improved. Since the latter is the most related to our work, we give a more detailed exposition in the paragraphs that follow. Before we proceed, we also want to refer the reader to \cite{BonneauMCNKF15,TschorschS16} for some extensive surveys which provide a good view of the research and challenges that exist in the area.

Sompolinksy and Zohar \cite{GHOST}, study the scalability of Bitcoin, analysing at the same time the security of the protocol when the delays in the network are not negligible. More specifically, they build on the results of Decker and Wattenhofer  \cite{DeckerW13} and explore the limits of the amount of transactions that can be processed under the protocol, while also studying how transaction waiting times can be optimised when there is also a security constraint security. In the same work, the Greedy Heaviest-Observed Sub-Tree chain (GHOST) is also presented as a modified version of the Bitcoin protocol selection rule, and as a way of obtaining a more scalable system with higher security guarantees. It is interesting to mention that many existing cryptocurrencies currently use variations of the GHOST rule, with  Ethereum \cite{Ethereum} and Bitcoin-NG \cite{EyalGSR16} being some notable examples. 
The authors argue that under this rule, the possible delays of the network cannot affect the security of the protocol even if the designer allows high creation rates of large-sized blocks and thus a high transaction throughput. 

Subsequently, Kiayias and Panagiotakos \cite{KiayiasP17} further study the GHOST protocol and provide a general framework for security analysis that focuses on protocols with a tree structure. They expand upon the analysis of \cite{GHOST} and follow a direction similar to the one presented in the work of Garay et al. \cite{GarayKL15}, which only studies chain structures and cannot be directly implemented in the setting of GHOST. We would like to point out that in \cite{GarayKL15} Garay et al. also provide an extended analysis of their framework for the partially synchronous model under the existence of bounded delays in the underlying P2P network of the protocol.

Lewenberg et al. \cite{LewenbergSZ15} propose the structure of a DAG, instead of a tree, as a way of dealing with high block creation rates and blocks of large size. Building on this idea, the same authors in \cite{SPECTRE} present SPECTRE, a new PoW-based protocol that follows the DAG-structure, and is both, scalable and secure. More specifically, they argue that SPECTRE can cope with high throughput of transactions while also maintaining fast confirmation times. They also analyse its security by studying several types of attacks. Part of the contribution of the paper is also introducing a way to (partially) order created blocks 
via a voting rule among existing blocks, which also contributes to the security of the protocol. SPECTRE has drawn the attention of many researchers after its introduction and we refer the reader to \cite{GiladHMVZ17,KarlssonJWAMRW18,PassS17,PHANTOM} for some indicative related works. 

Finally, DAG-based ledgers are also used in the white paper for IOTA, \cite{IOTA}, describing a cryptocurrency specifically designed for the Internet-of-Things industry. Although this setting is different (i.e. absence of PoW and block creation rewards etc.) and the focus is on how one can resolve conflicts between transactions and produce a system of high-active participation, this work is still highly related to ours, since IOTA is studied under an asynchronous network.

\section{DAG-based Ledgers}\label{sec:DBL}

In this section we will describe a family of decentralised consensus algorithms for public ledgers that generalise Bitcoin and SPECTRE. In what follows, we assume that there are $n$ strategic miners $m_1,...,m_n$ with hash powers $h_1,...,h_n$ respectively. When a given block is found globally by the protocol, $h_i$ represents the probability that this block belongs to $m_i$. We will be studying DAG-based ledger implementations. Formally, these ledgers are such that blocks and their pointers induce a directed acyclic graph with blocks as nodes and pointers as edges. The maximum out-degree of a block, $k$ is specified by the protocol and is in the range $1 \leq k \leq \infty$. Thus it is straightforward to see that Bitcoin for example, is a DAG-based ledger where the DAG is in fact a tree (with $k=1$). Finally, since blocks have bounded size, we define $1\leq \eta < \infty $ to be the maximum number of transactions a block can store.

As mentioned in the introduction, we are primarily interested in studying issues of fairness and ledger efficiency in DAG-based protocols catered to a high throughput regime. We recall that a protocol is fair if a miner can expect to see a block reward proportional to their hash power, and that a protocol efficiency is the ratio of all valid transactions to all transactions broadcast over the P2P network. In this setting, and under the assumption of a discrete time horizon, transactions and blocks that are propagated by users in the P2P network may take multiple \emph{turns} (the time it takes for the entire system to find a block) before they are seen by certain agents within the system. For this reason, miners only see a portion of the entire block DAG produced by a decentralised protocol as well as a portion of all transactions propagated by all end users of the ledger. 

In actuality, transactions that are posted to the P2P network of digital currencies directly depend on other transactions. For this reason, we also model the set of transactions that end users generate as a DAG. Furthermore, the structure of the transaction DAG itself has important implications for how transactions are packed in blocks for any DAG-based ledger. For example, if the transaction DAG is a path, and we are considering SPECTRE as our DAG-based protocol, it is easy to see that transactions will only be packed proportional to the deepest node of the block DAG, which in the high throughput regime can grow at a much slower rate than that at which transactions are generated. At the other extreme, if the transaction DAG only consists of isolated nodes, then any block can contain any transaction, and the efficiency of SPECTRE is thus constrained by what transactions miners see rather than the structure of the block DAG. 

Ultimately, in addition to having computational power, a miner also has informational power, which encapsulates how connected they are to the P2P network and consequently, how much of each of the aforementioned DAGs they see at a given time. We model the informational parameter of an arbitrary miner $m_i$ as a parameter $q_i \in [0,1]$. As $q_i$ approaches 1, $m_i$ is likely to see the entirety of both DAGs, whereas as $q_i$ approaches 0, $m_i$ is likely to only see the blocks he mines and transactions he creates. 

\subsection{Ledger Growth Preliminaries}\label{sec:lgp}

 We begin by setting some preliminary notation about graphs that it will be used in several parts as we define the model. Let $\mathcal{G}$ be the set of all finite directed graphs. For $G \in \mathcal{G}$, $V(G)$ and $E(G) \subseteq V(G)^2$ are the set of vertices and directed edges of $G$ respectively. Furthermore, for a tuple $x = (x_i)_{i=1}^n$, we let $\pi_i(x) = x_i$ be the projection onto the $i$-th coordinate. Finally, we define the \emph{closure} of a subset $X\subseteq V(G)$ of vertices, which will be needed in order to describe how a miner perceives the current state of the network.

 \begin{definition}[Closure] \label{def:cls}
Suppose that $G \in \mathcal{G}$, and let $X \subseteq V(G)$ be a subset of vertices. We denote the closure of $X$ in $G$ by $\Gamma (X \mid G)$ and define it as the subgraph induced by all vertices reachable from $X$ via directed paths.
\end{definition}

 We now proceed by formally describing and exploring the stochastic growth of a DAG-based ledger given $m_1,...,m_n$ strategic miners in a step-by-step fashion. As we already mentioned, we assume that the ledger grows over a finite discrete time horizon: $t = 1,...,T$. Each turn will consist of four phases: a \emph{block revelation phase}, in which nature picks a miner to initialise a block, an \emph{information update phase}, where miners update their views of the block and transaction graphs, an \emph{action phase},  in which miners employ their strategies depending on their local information, and a \emph{natural transaction generation phase}, in which non-miners stochastically publish transactions to the P2P network.

At the end of the action phase of each turn  $t$, we maintain a global block-DAG and transaction directed graph, denoted by $G^{glob}_t$ and $T^{glob}_t$ respectively. We say that the vertices of $G^{glob}_t$ are blocks and we have that $G^{glob}_t$ contains every block (public or private) that has been created up to turn $t$. Similarly, for $T^{glob}_t$ we have that it consists of every transaction present in the network up to point $t$.
We denote $V(G^{glob}_t) = \{B_1,...,B_t\}$, where the $i$-th block was created at the $i$-th turn and $V(T^{glob}_t) = T^*_t \cup \bar{T_t}$, where $T^*_t = \{tx^*_1,...,tx^*_t\}$ (enumerated) represents the set of the respective block rewards and $\bar{T_t}$ the set of the transactions.

Each block $B_t$, has out-degree of at most $k$ and carries at most $\eta+1$ transactions denoted by $Tx(B_t) \subseteq V(T^{glob}_{t-1})$ such that $tx^*_t \in Tx(B_t)$. On the other hand, the out degree of every transaction in $T^*_t$ is 0 and the out degree of every vertex in $\bar{T_t}$ is at least 1. The reason for the aforementioned constraints on the vertices of $G^{glob}$ and $T^{glob}_t$ is that when a block is found, block reward is created ``out of thin air'', and can hence be a designated as a transaction with no dependencies on which future transactions can depend. In addition, if $A \subseteq G^{glob}_t$, we let $Tx(A) = \cup_{B_j \in V(A)} Tx(B_j)$ be the set of all induced transactions from the subgraph $A$. Finally, these time-evolving graphs will have the property that if $t_1 < t_2$, then $G^{glob}_{t_1} \subseteq G^{glob}_{t_2}$ and $T^{glob}_{t_1} \subseteq T^{glob}_{t_2}$.

Let us now explore both the block and the transaction directed graphs from the perspective of a miner. We suppose that each miner $m_i$ has the following information at the end of turn $t$:
\begin{itemize}
    \item $G_{i,t}^{pub}$: The DAG consisting of all blocks $m_i$ has inferred from $G^{glob}_t$ via the P2P network.    
    \item $PB_{i,t} \subseteq V(G^{glob}_t)$: A set of private blocks $m_i$ has not yet shared to the P2P network.
    \item $T_{i,t}^{pub}$: The directed graph consisting of all transactions $m_i$ has inferred from $T^{glob}_t$ via the P2P network.
    \item $PT_{i,t} \subseteq V(T^{glob}_t)$: A set of private transactions $m_i$ has not yet shared to the P2P network.
\end{itemize}
Finally, we let $G^{pub}_t$ and $T^{pub}_t$ be the set of all blocks and transactions that have been shared to the P2P network respectively. 

\begin{definition}[Local Information]\label{def:local-info}
For a given miner $m_i$, we let $L_{i,t} = (G_{i,t}^{pub},PB_{i,t},T_{i,t}^{pub},\allowbreak PT_{i,t})$ and say this is the local information available to miner at the end of round $t$. We also say that $L_t = (L_{i,t})_{i=1}^t$ is the local information of all miners at the end of round $t$.
\end{definition}
 
 We conclude by defining what we mean by a single-step P2P information update for a miner, as well as what the strategy space available to a miner is.

\begin{definition}[Information Update] \label{def:inf-update}
Suppose that $H \subseteq G$ are graphs. Furthermore, suppose that the vertex set $A \in V(G) \setminus V(H)$. We define the distribution $U((H,G),A,q)$ as a single P2P information update via a specific sampling procedure. To sample $G' \sim U((H,G),A,q)$ we do the following:
\begin{itemize}
    \item Let $X = \emptyset$
    \item Independently, for each $v\in A$, with probability $q$, add $v$ to $X$.
    \item Let $G' = \Gamma(V(H) \cup X \mid G )$.
\end{itemize}

\end{definition}

\begin{definition}[Memoryless Miner Strategies] \label{def:miner-strat}
A miner strategy for $m_j$ is denoted by $S_j = (S^I_j,S^P_j, S^T_j)$ and consists of an initialisation strategy $S^I_j$, a publishing strategy $S^P_j$, and a transaction creation strategy $S^T_j$. Each of these functions takes as input $L_{j,t} = (G_{j,t}^{pub},PB_{j,t},T_{j,t}^{pub},PT_{j,t})$ at any given round $t$. 
\begin{itemize}
    \item  \emph{Initialisation strategy:} $S^I_j(L_{j,t}) = (X^I,Y^I)$ where $X^I \subseteq V(T_{j,t-1}^{pub}) \cup PT_{j,t-1}$ and $Y^I \subseteq V(G_{j,t-1}^{pub}) \cup PB_{j,t-1}$. Furthermore, $|X^I| \leq \eta $ and $1 \leq |Y^I| \leq  k$. 
    \item \emph{Publishing strategy:} $S^P_j(L_{j,t}) = (X^P,Y^P)$ where $X^P \subseteq PB_{j,t}$ and $Y^P \subseteq PT_{j,t}$. with the property that if $B_i \in X^P \Rightarrow tx^*_i \in Y^P$. 
    \item \emph{Transaction creation strategy:} $S^T_j(L_{j,t}) = \left( \{x_1,...,x_k\}, \{\Gamma^1(x_1),..., \Gamma^1(x_k)\}, W \right)$, where each $x_i \notin V(T^{priv}_{t-1})$, each set $\Gamma^1(x_i) \subseteq V(T^{priv}_{t-1})$ is non-empty, and $W \subseteq \{x_1,...,x_k \}$.
\end{itemize}

\end{definition}

To make sense of Definition \ref{def:miner-strat}, it suffices to note that $S^I_j$ is invoked when $m_j$ is chosen to mine a block. Set $X^I$ represents the set of the transactions that the block will contain. The number of these transactions can be at most $\eta$ and each block forcibly contains $tx^*_t$. On the other hand, set $Y^I$ describes the set of the blocks that the newly created block will point to. The number of these blocks can be at least 1 and at most $k$. Moving to $S^P_j$, this is invoked when $m_j$ wishes to publish hidden blocks/transactions to the P2P network. Finally, $S^T_j$ is invoked when $m_j$ wishes to create an arbitrary (finite) amount of new transactions $x_1,..,x_k$ that depend on transactions in $T^{priv}_{j,t-1}$ (each $x_i$ has a non-empty set $\Gamma^1(x_i)$ of dependencies). Notice that since $\Gamma^1(x_i) \neq \emptyset$, that forcibly each $x_i$ can not be of the form $tx^*_r$ for some $r$. Finally, $W \subseteq \{x_1,...,x_k\}$ represents which of the newly created transactions will be broadcast to the P2P network.


\subsection{Ledger Growth per State}\label{sec:lgps}

In order to describe and explain ledger growth in DAG-based ledgers in more detail, we introduce the notion of a ledger \emph{state} and describe how the ledger transitions from state to state. More specifically, at the end of round $t$, we say the ledger is in state $\sigma_t = \{L_t, G^{glob}_t,\allowbreak T^{glob}_t, G^{pub}_t, T^{pub}_t\}$. Recall that as we already mentioned in Section \ref{sec:lgp}, each turn $t$ consists of four different phases. We now present and describe formally what happens at each phase of a turn in terms of ledger growth.

\subsubsection{The Genesis Block and the Beginning of the Ledger: $\sigma_0$}
\label{subsubsec:genesis}
We bootstrap the ledger by creating a genesis block, $B_0$, which comes along with a genesis transaction $tx^*_0$. As such, $V(G^{pub}_0) = B_0$, $E(G^{pub}_0) = \emptyset$, $V(T^{pub}_0) = tx^*_0$, and $E(T^{pub}_0) = \emptyset$. In addition, $L_{0,i} = (G^{pub}_0, \allowbreak \emptyset,tx^*_0,\emptyset)$ for each $m_i$, $G^{pub}_0 = G^{glob}_0$, and $T^{pub}_0 = T^{glob}_0$, thus fully defining $\sigma_0$.

\subsubsection{The Mining Phase of Round $t$: $\Delta_1$}
\label{subsubsec:mining-phase}
Suppose that $\sigma_{t-1}$ is given. First a random miner is chosen, where the probability that each $m_i$ is chosen is precisely $h_i$. Suppose $m_j$ is drawn, the block $B_t$ is initialised as follows: $Tx(B_t) = \pi_1(S^I_j(L_{j,t})) \cup tx^*_t$ and we update the global block DAG by letting $V(G^{glob}_t) = V(G^{glob}_{t-1}) \cup B_t$ and $E(G^{glob}_t) = E(G^{glob}_{t-1}) \cup \{(B_t,y) \ \mid \ y \in \pi_2(S^{I}_j(L_{j,t})) \}$. In addition, for $r \neq j$, we let $L_{r,t-1}' = L_{r,t-1}$, and for $j$, we let $L_{j,t-1}' = ((G^{pub}_{j,t-1}, PB_{j,t-1}\allowbreak \cup B_t),(T^{pub}_{j,t-1}, PT_{j,t-1} \cup tx^*_t))$, so that $L_{t-1}' = (L_{i,t-1}')_{i=1}^n$. With this in hand, we can express the randomised transition $\Delta_1(\sigma_{t-1})$ as follows: $\Delta_1(\sigma_{t-1}) = \{ L_{t-1}', G^{glob}_t, T^{glob}_{t-1}, G^{pub}_{t-1}, \allowbreak T^{pub}_{t-1} \}
$

\subsubsection{The Information Update Phase of Round $t$: $\Delta_2$}
\label{subsubsec:info-update-phase}
Suppose that $\Delta_1(\sigma_{t-1}) =  \{ L_{t-1}', G^{glob}_t,\allowbreak T^{glob}_{t-1},  G^{pub}_{t-1}, T^{pub}_{t-1} \} $ is given after block initialisation. For each miner $m_i$, we sample $A_i \sim U \left( (G^{pub}_{i,t-1}, G^{pub}_{t-1}), V(G^{pub}_{t-1}) \setminus V(G^{pub}_{i,t-1}), q_i \right)$ from the block-DAG, and from the transaction DAG we sample $C_i \sim U \left( (T^{pub}_{i,t-1}, T^{pub}_{t-1}), V(T^{pub}_{t-1}) \setminus V(T^{pub}_{i,t-1}), q_i \right)$. We also let $F_i = \Gamma (V(C_i) \cup Tx(A_i) \mid T^{glob}_{t-1})$. Accordingly, we can define $L_{i,t}{''} = ((A_i, PB_{i,t-1}),(F_i,PT_{i,t-1}))$, and $L_{i,t-1}{''} = (L_{t-1}{''})_{i=1}^n$. With this in hand, we can express the randomised transition $\Delta_2 \circ \Delta_1 (\sigma_{t-1})$ as follows:
$\Delta_2 \circ \Delta_1 (\sigma_{t-1}) = \{ L_{t-1}{''},  G^{glob}_t, T^{glob}_{t-1}, G^{pub}_{t-1}, T^{pub}_{t-1} \}
$

\subsubsection{The Action Phase of Round $t$: $\Delta_3$}
\label{subsubsec:action-phase}
Let $\Delta_2 \circ \Delta_1 (\sigma_{t-1}) = \{ L_{i,t}{''},  G^{glob}_t, \allowbreak T^{glob}_{t-1}, G^{pub}_{t-1}, T^{pub}_{t-1} \}$ be given after an information update phase. We first update the global transaction graph by letting the vertices of a new graph be:
$$
V(T^{glob'}_t) = V(T^{glob}_{t-1}) \cup \{tx^*_t\} \cup \left( \cup_{i=1}^n \pi_2(S^P_i(L_{i,t-1}{''})) \right) \cup \left( \cup_{i=1}^n \pi_3( S^T_i(L_{i,t-1}{''}) ) \right) 
$$ 
and letting the edges of the new graph be:
$$
E(T^{glob'}_t) = E(T^{glob}_{t-1}) \cup \left( \cup_{i=1}^n E(\pi_2(S^P_i(L_{i,t-1}{''}))) \right) \cup \left( \cup_{i=1}^n E (\pi_3( S^T_i(L_{i,t-1}{''}) ) ) \right)
$$
where we slightly abuse notation and for a set of transactions, $A$, we denote $E(A)$ as the set of all tuples $(x,v)$ such that $x \in A$ and $v$ is a transaction $x$ depends on. With this in hand, We can define the following objects according to the deterministic strategies $S_i$ of each $m_i$:

\begin{itemize}
    \item $G^{pub}_{i,t} = \Gamma (V(G^{pub}_{i,t-1}) \cup \pi_1(S^P_i(L_{i,t-1}{''})) \mid G^{glob}_t)$
    \item $PB_{i,t} = PB_{i,t-1}{''} \setminus \pi_1(S^P_i(L_{i-,t-1}{''}))$
    \item $T^{pub}_{i,t} = \Gamma (V(T^{pub}_{i,t-1}) \cup \pi_2(S^P_i(L_{i,t-1}{''})) \mid T^{glob'}_t)$
    \item $PT_{i,t} = PT_{i,t-1} \setminus \pi_2(S^P_i(L_{i-,t-1}{''}))$
\end{itemize}
Finally, we let $L_{i,t} = ((G^{pub}_{i,t}, PB_{i,t}),(T^{glob}_{i,t}, PT_{i,t}))$, $L_t = (L_{i,t})_{i=1}^n$,  $G^{pub}_t = \cup_{i=1}^n G^{pub}_{i,t}$ and $T^{pub'}_t = T^{pub}_{t-1} \cup \left( \cup_{i=1}^n T^{pub}_{i,t} \right)$. With this in hand, we can express the transition $\Delta_3 \circ \Delta_2 \circ \Delta_1 (\sigma_{t-1})$:

$$
\Delta_3 \circ \Delta_2 \circ \Delta_1 (\sigma_{t-1}) = \{L_t, G^{glob}_t, T^{glob'}_t, G^{pub}_t, T^{pub'}_t \}
$$

\subsubsection{Nature Adds Transactions to $T^{glob}_t: \Delta_4$}
\label{subsubsec:nature-tx}
Finally, we model transactions that are naturally produced by agents other than miners, $m_1,...,m_n$ (the end-users of the ledger) and broadcast via the P2P network. To this end, we suppose that $\mathcal{D}$ is a pre-defined function, such that $\mathcal{D}(G^{glob}_t, T^{glob'}_t)$ is a distribution that returns $( \{x_1,...,x_k\}, \{\Gamma^1(x_1),...,\Gamma^1(x_k)\})$, a random finite set of new transactions $\{x_1,...,x_k\}$ along with their dependencies $\{\Gamma^1(x_1),...,\Gamma^1(x_k)\}$, with the property that $\Gamma^1(x_i) \neq \emptyset$ for each $x_i$. This latter point ensures that no $x_i$ is of the form $tx^*_r$. With this in hand, we let $A \sim \mathcal{D}(G^{glob}_t, T^{glob'}_t)$ and define $V(T^{glob}_t) = T^{glob'}_t \cup \pi_1(A)$, $E(T^{glob}_t) = E(T^{glob'}_t) \cup \pi_2(A)$, $V(T^{pub}_t) = T^{pub'}_t \cup \pi_1(A)$, and $E(T^{pub}_t) = E(T^{pub'}_t) \cup \pi_2(A)$. Ultimately, this allows us to define
$$
\Delta_4 \left( \{L_t, G^{glob}_t, T^{glob'}_t, G^{pub}_t, T^{pub'}_t \} \right) = \{L_t, G^{glob}_t, T^{glob}_t, G^{pub}_t, T^{pub}_t \}
$$

\subsubsection{Putting Everything Together.}In the previous sections we have detailed every process of the evolution of a DAG-based ledger. Before stating the formal definition in its full generality, we recall that the aforementioned growth process has several parameters: $\theta = \left( \vec{m}, \vec{h}, \vec{q}, \vec{S}, \eta, k, \mathcal{D} \right)$.

\begin{itemize}
    \item $\vec{m} = (m_1,...,m_n)$ are strategic miners.
    \item $\vec{h} = (h_1,...,h_n)$ are the hash rates of each miner.
    \item $\vec{q} = (q_1,...,q_n)$ is the informational parameter of each miner.
    \item $\vec{S} = (S_1,...,S_n)$ are ledger-building strategies for strategic miner
    \item $\eta \geq 1$ is the maximum number of transactions in a single block.
    \item $k \geq 1$ is the maximum out-degree of a block in the block DAG.
    \item $\mathcal{D}$ is a natural transaction generation distribution as per above.
\end{itemize}

To avoid confusion, we let $\Delta_\theta = \Delta_4 \circ \Delta_3 \circ \Delta_2 \circ \Delta_1$, with the influence of $\theta$ implicit in our previous exposition. With this we are able to fully formulate ledger growth in its complete generality. 

\begin{definition}[Formal Ledger Growth] \label{def:ledger-growth}
Suppose that $\theta = \left( \vec{m}, \vec{h}, \vec{q}, \vec{S}, \eta, k, \mathcal{D} \right)$. We let $\mathcal{P}_\theta$ denote DAG-based ledger growth governed by $\theta$ and define it recursively as follows: We let $\sigma_0$ be defined as per Section \ref{subsubsec:genesis}, and consecutively let $\sigma_t = \Delta_\theta^t(\sigma_0)$ for $t \geq 1$.

\end{definition}

\section{$\mathcal{P}_{f,k}$ Ledger Models and Honest Behaviour}\label{sec:hon}

In the previous section we described at length how a ledger grows under arbitrary parameters $\theta = \left( \vec{m}, \vec{h}, \vec{q}, \vec{S}, \eta, k, \mathcal{D} \right)$. The main purpose of this section is twofold: first we introduce a family of honest strategies that generalise honest mining in Bitcoin and SPECTRE called $\mathcal{P}_{f,k}$ mining, and second we introduce constraints on $\mathcal{D}$ that represent honest transaction generation by end-use agents in a DAG-based ledger (this includes Bitcoin and SPECTRE as well).   


\begin{definition}[Depth and Weight of a Block]\label{def:depth-weight}
Suppose that $G \in \mathcal{G}$ is a block-DAG. In other words, $G$ is connected and has a genesis block $B_0$. For a given $B_t \in G$, we let $w(B_t) = |\Gamma(\{B_t\} \mid G))| - 1$ be the weight of $B_t$. This is the number of predecessors $B_t$ has in $G$. We also define $D(B_t) = d_G(B_t,B_0)$ as the depth of $B_t$. This is the graphical distance between $B_t$ and $B_0$, i.e. the length of the (unique) shortest path between $B_t$ and $B_0$ in $G$. 
\end{definition}

In Bitcoin, miners resolve ambiguity in ledger consensus by initialising found blocks to point to the longest chain in the DAG. One reason for this is that agents have provably used significant computational power to grow said chain, and re-writing this history is thus computationally infeasible. In DAG-based ledgers, agents may point to multiple blocks. Thus, following this same thought process, they should point to blocks with a provably significant amount of computation in their histories. The issue, however, is that measuring how much computation exists in the past of a leaf is ambiguous in DAGs: a block could have either large weight or large depth (unlike in Bitcoin where these quantities are always the same), and it is unclear to decide which to give precedence to. In order to completely rank the importance of leaves in a block DAG, we simply use a family of score functions that expresses convex combinations of depth and weight.

\begin{definition}[Score Function]\label{def:score-function}
Suppose that $\alpha \in [0,1]$ and $\beta = 1- \alpha$. We say that $f$ is an $(\alpha,\beta)$ block-DAG score function if for a given block-DAG, $G \in \mathcal{G}$, $f(B_t) = \alpha D(B_t) + \beta w(B_t)$.
\end{definition}

In a nutshell, honest block-DAG growth in $\mathcal{P}_{f,k}$ protocols with parameter $\alpha$ and $\beta$ prescribes that miners prepare blocks with at most $k$ pointers that point the locally visible blocks in the block-DAG that with highest score under $f$. 

\subsection{Valid Blocks and Transactions}
\label{sec:valid-blocks-txs}

In ledgers employing decentralised consensus protocols, there is an explicit consensus mechanism whereby agents are able to look at their local view of the ledger and extrapolate valid blocks and subsequently valid transactions within the local view of the ledger. In Bitcoin for example, valid blocks consist of the longest chain in the ledger, and valid transactions consist of transactions within said longest chain. SPECTRE, on the other hand, has any seen block as valid, but the valid transaction extraction process is a complicated voting procedure that extracts a subset of transactions within the local view of the DAG as valid. We proceed by providing a definition of valid block and transaction extractors in $\mathcal{P}_{f,k}$ models that generalises both of these examples. 

\begin{definition}[Valid Block Extractors and Valid Transaction Extractors]
Suppose that $G$ is a block-DAG and that $\ell_1,..,\ell_k$ are the $k$ leaves in $G$ that have the highest score under $f$. Then we say that $VB(G) =  \Gamma (\{\ell_1,...,\ell_k\} \mid G)$ is the DAG of valid blocks in $G$ under $\mathcal{P}_{f,k}$. In addition, we let $VT(G) \subseteq Tx(VB(G))$ be the set of valid transactions for a specified transaction extractor function $VT$. We say that $VT$ is in addition monotonic if it holds that if $VB(H) \subseteq VB(G)$, then $VT(H) \subseteq VT(G)$.  

\end{definition}

In what follows we define a special type of monotonic valid transaction selection rule called \emph{present transaction selection}. The reason we outline this simple selection rule is that in Section \ref{sec:honest-tx-constitency} we will show that if all miners employ monotonic valid transaction selection and the honest strategies presented in Section \ref{sec:honest-strategy}, then we can assume without loss of generality that they employ present transaction selection as a valid transaction selection rule.

\begin{definition}[Present Transaction Selection]
Suppose that $G$ is a block-DAG and that $\ell_1,..,\ell_k$ are the leaves in $G$ that have the highest score under $f$. Then we say that $VB(G) =  \Gamma (\{\ell_1,...,\ell_k\} \mid G)$ is the DAG of valid blocks in $G$ under $\mathcal{P}_{f,k}$. In addition we say that $PVT(G) = Tx(VB(G))$ is the set of present valid transactions in $G$ under $\mathcal{P}_{f,k}$.

\end{definition}

\subsection{Defining $S^I, S^P$, and $S^T$ for Honest Mining in $\mathcal{P}_{f,k}$}
\label{sec:honest-strategy}

We define $H_j = (H^I_j,H^P_j,H^T_j)$ as the honest strategy employed by $m_j$ in $\mathcal{P}_{f,k}$, and describe each component below. 

\begin{itemize}
    \item $H^I_j$: Compute $A = VB(G^{pub}_{j,t})$ and $B = VT(G^{pub}_{j,t})$. Let $H^I_j(L_{j,t}) = (X,Y)$. $X$ is the set of at most $\eta$ oldest non-block-reward (i.e. not of the form $tx^*_r$) transactions in $T^{pub}_{j,t} \setminus B$ (ties are broken arbitrarily) with a graphical closure in $B$. $Y=\{\ell_1,...\ell_k\}$ is the set of $k$ highest-score leaves in $G^{pub}_{j,t}$ under $f$. 
    
    \item $H^P$: Publish all private blocks and transactions immediately 
    \item $H^T$: Create no new transactions (the assumption is that transactions created by pools are negligible with respect to the total transaction load of the ledger)
\end{itemize}

Before continuing, we note that in the $\mathcal{P}_{f,k}$ model, $H^T$ ensures that honest miners do not create and broadcast any transactions themselves. This, of course, is not the case in practice, but it is an accurate approximation to a regime in which the fraction of transactions created by miners is a negligible fraction of all transactions created by end-users of the ledger.

\myheader{Implementation of Bitcoin and SPECTRE as $\mathcal{P}_{f,k}$ Protocols} With the previous machinery in place, we can see that block-DAG and transaction-DAG growth in Bitcoin and SPECTRE are special cases of $\mathcal{P}_{f,k}$ ledgers. For Bitcoin, we let $k = 1$, and any parameter setting, $(\alpha,\beta)$ for $f$ results in Bitcoin growth. As for SPECTRE, we let $k = \infty$ and once more any parameter setting $(\alpha,\beta)$ for $f$ suffices to implement honest SPECTRE ledger growth.

\subsection{Honest Transaction Consistency}
\label{sec:honest-tx-constitency}

As mentioned in Section \ref{sec:valid-blocks-txs}, we can show that amongst monotonic transaction extractors, present transaction extractors are all we need for honest ledger growth in the $\mathcal{P}_{f,k}$ model.

\begin{theorem}
If the valid transaction extractor, $VT$, is monotonic and all miners employ $H = (H^I, H^P, H^T)$, then $VT$ is a present transaction extractor.
\end{theorem}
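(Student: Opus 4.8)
The plan is to prove, by induction on the turn $t$, the following joint statement about the honest execution: (i) the trajectory of states $\sigma_0,\dots,\sigma_t$ is unchanged if every occurrence of $VT$ inside the miners' strategies is replaced by $PVT$; and (ii) $VT$ and $PVT$ agree on every block-DAG that appears as a local public view $G^{pub}_{i,s}$, or as $G^{pub}_s$ or $G^{glob}_s$, for $s\le t$. Part (i) at turn $t$ follows immediately from part (ii) at turn $t-1$: the only place $H=(H^I,H^P,H^T)$ consults the extractor is the set $B=VT(G^{pub}_{j,t})$ used inside $H^I_j$, while $H^P$, $H^T$, and the environment maps $\Delta_1,\dots,\Delta_4$ never mention $VT$; since $H^I_j$ at turn $t$ evaluates the extractor on a public view already present at the end of turn $t-1$, where $VT=PVT$ by hypothesis, the block $B_t$ (its transaction payload and its pointers) is exactly the one the $PVT$-execution produces, so the states coincide. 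It then remains to propagate (ii) to the views newly created at turn $t$, and since $VT(G)\subseteq Tx(VB(G))=PVT(G)$ holds by definition for any view, the work is to show $Tx(VB(G))\subseteq VT(G)$ for each such $G$.

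The engine of that inclusion is the following lemma, which I would isolate first: if an honest miner $m_j$ initialises a block $B_s$ from the public view $\hat G=G^{pub}_{j,s-1}$, then for every reachable block-DAG $G$ with $B_s\in VB(G)$ we have $VB(\hat G)\subseteq VB(G)$. To see this, recall that $H^I_j$ makes $B_s$ point to exactly the $k$ highest-scoring leaves $\ell_1,\dots,\ell_k$ of $\hat G$; every public view is built by the closure operator $\Gamma(\cdot\mid\cdot)$ and is therefore a downward-closed subgraph of $G^{glob}$, so $\Gamma(\{\ell_1,\dots,\ell_k\}\mid\hat G)=VB(\hat G)$ coincides with their closure in $G^{glob}$, whence the full causal past of $B_s$ in $G^{glob}$ is precisely $\{B_s\}\cup VB(\hat G)$. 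Since $VB(G)=\Gamma(\{\ell'_1,\dots,\ell'_k\}\mid G)$ is itself downward closed and contains $B_s$, it contains all of $\Gamma(\{B_s\}\mid G)=\{B_s\}\cup VB(\hat G)$, so $VB(\hat G)\subseteq VB(G)$, and monotonicity of $VT$ then upgrades this to $VT(\hat G)\subseteq VT(G)$.

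With the lemma in hand I would finish (ii) by a secondary induction on the depth of a transaction in the transaction-DAG (the length of its longest dependency chain). The base case is a block-reward $tx^*_r$ with $B_r\in VB(G)$, which is valid by the harmless convention that block rewards of valid blocks are valid. For a non-reward $tx\in Tx(VB(G))$, say $tx\in Tx(B_s)$ with $B_s\in VB(G)$, the block $B_s$ was produced by some honest $m_j$ from a view $\hat G$ covered by the outer hypothesis, so $VT(\hat G)=PVT(\hat G)=Tx(VB(\hat G))$; by definition of $H^I_j$ every dependency of $tx$ lies in $VT(\hat G)$, hence (by the lemma) in $Tx(VB(G))$ and strictly shallower than $tx$, hence in $VT(G)$ by the secondary induction. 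Since $tx$ itself lies in the valid block $B_s$ and all its dependencies are now in $VT(G)$, the sanity requirements on a valid transaction extractor (a dependency-closed selection in which a conflict-free transaction sitting in a valid block with valid dependencies is itself valid) force $tx\in VT(G)$, closing both inductions and yielding $VT\equiv PVT$ along the honest trajectory.

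I expect the main obstacle to be the bookkeeping inside the lemma: making precise which public view an honest miner actually holds when $H^I$ is invoked in $\Delta_1$ (accounting for the $\Delta_2$ updates of all earlier turns), and formally verifying the invariant that every local public view stays downward-closed in $G^{glob}$, so that the two closure identities $\Gamma(\{B_s\}\mid G^{glob})=\{B_s\}\cup VB(\hat G)$ and $\Gamma(\{B_s\}\mid G)\subseteq VB(G)$ really hold. A secondary delicate point is the final implication ``dependencies valid and $tx$ in a valid block $\Rightarrow tx$ valid'': this leans on the precise definition of a valid transaction extractor and on the absence of conflicts among transactions in valid blocks, which should be traced back to the honesty constraints placed on $\mathcal{D}$ together with the fact that $H^I$ never re-introduces a conflicting transaction.
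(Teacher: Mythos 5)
Your proposal is correct and its core coincides with the paper's own argument: $VT(G)\subseteq PVT(G)$ is immediate from the definitions, and for the reverse inclusion the paper likewise uses that an honestly created block $B_r$ only packs transactions whose dependencies are valid in its creator's view, identifies that view's valid blocks with the causal past of $B_r$ (your lemma, which the paper leaves implicit by writing the dependencies as lying in $VT(\Gamma(B_r\mid G))$), transfers validity to the larger view via monotonicity, and closes with the same unstated property you flag, namely that a transaction sitting in a valid block with all dependencies in $VT(G)$ must itself lie in $VT(G)$. The only real difference is presentational: the paper gives this as a direct two-inclusion argument with no induction, so your outer induction on turns and the secondary induction on transaction depth are redundant scaffolding, since your lemma together with monotonicity already places the dependencies in $VT(G)$ directly.
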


\begin{proof}
Suppose that $L_{i,t} = (G^{pub}_{i,t}, PB_{i,t},T^{pub}_{i,t}, PT_{i,t})$ is the local information available to $m_i$ at turn $t$. Since $m_i$ is honest, one can easily see that $G^{pub}_{i,t} =  G^{priv}_{i,t} = G_{i,t}$ and $T^{pub}_{i,t} = T^{priv}_{i,t} = T_{i,t}$. Clearly $VT(G_{i,t}) \subseteq PVT(G_{i,t})$. Now suppose that $x \in PVT(G_{i,t})$. This means that $x \in Tx(B_r)$ for some block $B_r$ found by say $m_j$. This means that in turn $r$, $m_j$ invoked $H^I$ to create $B_r$, which means that since $x \in Tx(B_r)$, all dependencies of $x$ are in $VT(\Gamma(B_r\mid G_{i,t}))$, the valid transactions from the DAG consisting of the closure of $B_r$ in the block DAG. However $VT(\Gamma(B_r\mid G_{i,t})) \subseteq VT(G_{i,t})$ since $VT$ is monotonic. Therefore $x$ has its dependencies met in $VT(G_{i,t})$ so that $x \in VT(G_{i,t})$. This implies $VT(G_{i,t}) = PVT(G_{i,t})$ as desired.  
\qed
\end{proof}

In light of this theorem, we focus on monotonic valid transaction extractors given their generality. Hence, from now on we assume that when we invoke $VT$, we in fact mean that $VT$ is a present transaction extractor.

\subsection{Honest Natural Transaction Generation}

Notice that $H^T$ dictates that each $m_i$ does not produce or propagate transactions created by themselves. Hence, it is crucial that we properly define $\mathcal{D}$ in the $\mathcal{P}_{f,k}$ model. At first one may be tempted to simply treat the random growth of $T^{glob}_t$ as independent of $G^{glob}_t$, but this is a grave mistake. To see why, imagine that $G^{glob}_t$ contains some block $B_r$ that is orphaned by each $m_i$ (note that this can only happen if $k < \infty$). If the growth of $T^{glob}_t$ is independent of that of $G^{glob}_t$, then it could be the case that many (if not infinitely many) future transactions depend on $t^*_r$. However, if $B_r$ is orphaned by all miners, $tx^*_r$ is not valid, hence none of these future transactions will be added to the ledger via close inspection of how $H^I$ is defined. 

A compelling fact is that if all miners have orphaned $B_r$, then chances are that whatever local view of $G^{pub}_t$ an end-user has, they too will have orphaned $B_r$, and thus will not have $tx^*_r$ as a valid transaction. In more direct terms, any money created via the block reward of $B_r$ is not actually in the system for an end-user, so if this end-user is honest, there is no reason why they would produce transactions that would depend on this illegitimate source of currency.

\begin{definition}[Honest Transaction Distributions] 
Let $G^{glob}_t$ be a global block DAG at turn $t$ with $k$ highest leaves are $\ell_1,...,\ell_k$. In an honest setting, $VB(G^{glob}_t) = \Gamma( \{\ell_1,...,\ell_l\} \mid G^{glob}_t) $ and $VT(G^{glob}_t) = Tx(VB(G^{glob}_t))$.  We say that $\mathcal{D}(G^{glob}_{t}, T^{glob}_t)$ is an honest transaction distribution if $x \sim \mathcal{D}(G^{glob}_{t}, T^{glob}_t)$ is such that $x \notin T^{glob}_t$ and its dependencies lie strictly in $VT(G^{glob}_t)$. 
\end{definition}

\subsection{Non-Atomic Miners}

It is common in the analysis of Bitcoin and related cryptocurrencies to group an arbitrary number of honest miners into one honest miner with the aggregate hash power of all different honest miners. The behaviour of multiple honest miners or one aggregate honest miner is indistinguishable from the local perspective of a single strategic miner. In the DAG-based ledger one could perform a similar analysis, however the partial information inherent in our model does not allow us to directly do so. The reason for this is that separate honest miners may have different partial views of the block DAG and transaction DAG. However, if for a given time horizon $t = 1,...,T$, it is the case that a group of honest miners each have a small enough hash power that they most likely never see more than at most one block in the time horizon, we can aggregate all said miners into one large honest miner that simply re-samples their view of the block DAG and transaction DAG whenever they are chosen to initialise a block and invoke $H^I$. We call such miners non-atomic and formalise their definition below.

\begin{definition}[Non-atomic miner]
For a given time-horizon, $t = 1,...,T$, we say that a group of miners $m_1',...,m_k'$ is non-atomic if with high probability, each such miner finds at most one block in this time horizon. 
\end{definition}

\myheader{Simulating Non-Atomic Miners}
When a group of non-atomic miners finds a block they need to create a fresh sample of what partial information they have in the block/transaction DAG. To do so, let us suppose that $m_1',...,m_k'$ are a group of non-atomic miners which we group together into a single miner, $m^*$ of hash power $\sum_{i=1}^k h_i$. Furthermore, we suppose that each non-atomic miner in this group has informational parameter $q$. When $m^*$ finds a block at time-step $t = 1,...,T$, we simply take each block/transaction present in the global block/transaction DAG, and for each turn it has been present in its respective global DAG, flip a biased coin of weight $q$ to see whether $m^*$ directly sees this block/transaction. Once this preliminary list of seen blocks/transactions is created, $m^*$ also sees all ancestors of said blocks/transactions.

\subsection{Payoffs and Transaction Generation Rate}

\myheader{Block Rewards and Transaction Fees}We suppose that at time-step $T$, miners get a normalised block reward of 1 per block that they have in $VB(G^{glob}_T)$. As for transaction fees, the full generality of $P_{f,k}$ protocols only specifies how to extrapolate valid transactions conditional upon everyone being honest, and not who receives transaction fees (this is subsumed in the details of $VT$ in the general setting). For this reason we further assume that transaction fees are negligible in comparison to block rewards over the time horizon $t = 1,...,T$.

\myheader{Transaction Generation Rate} Although in full generality there is no restriction on how many transactions nature may create in a given turn, we impose a fixed constraint on this quantity: $\lambda$. As such, each turn introduces $\{x_{t,1},...,x_{t,\lambda} \}$ transactions sampled from a specified honest transaction distribution $\mathcal{D}$. Furthermore, in our simulations we let $\lambda = \eta$, so that the ledger infrastructure can, in theory, cope with the transaction load if all miners have full information, and thus we can see specifically it falls short of this objective in the partial information setting.



\section{Computationally Modelling Honest Growth in $\mathcal{P}_{f,k}$ Ledgers}\label{sec:CMHG}

\begin{algorithm} 
\caption{Honest Ledger Growth}
\label{alg:honest_ledger_growth}                           
\begin{algorithmic}                    
    \STATE
	\item \textbf{Require:}
    \STATE Ledger growth Parameters: $\theta = \left( \vec{\alpha}, \vec{q}, \eta, k, \mathcal{D}, \lambda, T \right)$
    \STATE
	\item \textbf{Genesis:}
	\STATE $G_0 \leftarrow (\{B_0\}, \ \emptyset)$, \ $T_0 \leftarrow (\{tx^*_0\}, \emptyset), \ Tx(B_0) \leftarrow tx^*_0$	
	\FOR{$j = 1,\ldots,N$}
    	\STATE $G^{vis}_{i,0} \leftarrow G_0$  
    	\STATE $T^{vis}_{i,0} \leftarrow T_0$
	\ENDFOR
	\STATE 

        \FOR{$t = 1,\ldots,T$}
            \item \textbf{Mining Phase:}
            \STATE Sample $i$ from $\vec{\alpha}$
            \STATE Compute $\ell_1,...,\ell_r$, top $r \leq k$ leaves of $G^{vis}_{i,t-1}$ in terms of score, $f$
            \STATE Compute $tx_{i_1},...,tx_{i_s}$, with $s \leq \eta$, oldest pending transactions of non-zero out-degree in $T^{vis}_{i,t-1}$ (ties broken lexicographically)    
            \STATE $G_t \leftarrow (V(G_{t-1}) \cup B_t, \ E(G_{t-1}) \cup \{ (B_t,\ell_1),...,(B_t,\ell_r) \})$
            \STATE $Tx(B_t) = tx_t^* \cup \{tx_{i_1},...,tx_{i_s}\}$
            \STATE $Owner(B_t) = i$
            \STATE $V(G^{vis}_{i,t-1}) \leftarrow B_t, \ V(T^{vis}_{i,t-1}) \leftarrow tx^*_t$
            \STATE
            \item \textbf{Tx Generation:}
            \STATE Compute valid transactions: $VT_{t-1} \subseteq V(T_{t-1})$
            \STATE $T_t \leftarrow (V(T_{t-1}) \cup \{tx^*_t\}, \ E(T_{t-1})) $
            \FOR{$j = 1,\ldots,\lambda$}
                \STATE Use $\mathcal{D}$ and $VT_{t-1}$ to draw $\{tx_{i_1},...,tx_{i_s}\}$ dependencies of $tx_{t,j}$
                \STATE $V(T_t) \leftarrow V(T_t) \cup \{tx_{t,j}\}, \ E(T_t) \leftarrow E(T_t) \cup \{(tx_{t,j}, tx_{i_1}),..., (tx_{t,j}, tx_{i_s})\}$ 
                \STATE 
            \ENDFOR
            \item \textbf{Information Update:}
            \FOR{$j = 1,\ldots,N$}
                \STATE $G^{vis}_{j,t} \leftarrow G^{vis}_{j,t-1}, \ T^{vis}_{j,t} \leftarrow T^{vis}_{j,t-1}$
                \FOR{each $B_r \in V(G_{t}) \setminus V(G^{vis}_{j,t})$}
                    \STATE With probability $q_i$: $V(G^{vis}_{j,t}) \leftarrow V(G^{vis}_{j,t}) \cup \{B_r\}$
                \ENDFOR
                \FOR{each $tx_s \in V(T_t) \setminus V(T^{vis}_{j,t-1})$}
                    \STATE With probability $q_i$: $V(T^{vis}_{j,t}) \leftarrow V(T^{vis}_{j,t}) \cup \{tx_s\}$
                \ENDFOR
                \STATE $G^{vis}_{j,t} \leftarrow \Gamma(V(G^{vis}_{j,t}),G_t)$
                \STATE $T^{vis}_{j,t} \leftarrow \Gamma(Tx(G^{vis}_{j,t}) \cup V(T^{vis}_{j,t}), T_t)$
            \ENDFOR
        \ENDFOR   
    \STATE
    \RETURN $G_T, T_T, \{Owner(B_t)\}_{t = 1}^T, \{Tx(B_t)\}_{t=1}^T, \{G^{vis}_{i,T}\}_{i=1}^N, \{T^{vis}_{i,T}\}_{i=1}^N$
  
\end{algorithmic}
\end{algorithm}

We now describe the pseudo-code for honest ledger growth in $\mathcal{P}_{f,k}$ ledgers. In the implementation described in Algorithm \ref{alg:honest_ledger_growth} we make the following assumptions:
\begin{itemize}
    \item The underlying ledger is $\mathcal{P}_{f,k}$
    \item All miners are Honest
    \item $VT$ is a present transaction extractor
    \item The number of transactions created by miners is negligible with respect to the total number of transactions created within the P2P network. 
    \item The honest transaction generation distribution, $\mathcal{D}(G^{glob}_t,T^{glob})t)$, is modelled as follows: for a given transaction, $tx$, to be sampled via $\mathcal{D}$, first $k \sim Poiss(\gamma)$ is drawn as the number of dependencies $tx$ will have from $VT(G^{glob}_t)$, and subsequently, $x_1,...,x_k$ are drawn from $VT(G^{glob}_t)$ uniformly randomly, where we allow repetitions. Ultimately we let $\{y_1,..,y_r\}$ be the unique transaction set extrapolated from $\{x_1,...,x_k\}$ and this is precisely the set of dependencies of the newly sampled $tx$. We note that transactions can depend on the block reward of arbitrarily old blocks. Hence uniformity is not an unfeasible assumption in sampling dependencies of a new transaction given our time horizons. Finally, in our simulations, since $T = 50$, we let $\gamma = 2$ so that transactions on average have multiple dependencies on average.   
\end{itemize}
Given these assumptions, we also condense the notation used for key variables needed for ledger growth in the pseudo-code for Algorithm \ref{alg:honest_ledger_growth}. In particular:
\begin{itemize}
    \item $G_t$ and $T_t$ are the global block and transaction DAGs respectively
    \item $G^{vis}_{i,t}$ and $T^{vis}_{i,t}$ are the portions of $G_t$ and $T_t$ visible to $m_i$ at the end of turn $t$.
    \item $Owner(B_t) = i$ means that $m_i$ mined block $B_t$
\end{itemize}

\section{Results}

\subsection{Fairness} We recall that one of the key properties of Bitcoin is that it is fair: miners earn block reward proportional to the computational resources they expend on extending the ledger. One of the most significant observations from our simulations is that $\mathcal{P}_{f,k}$ ledgers are rarely fair as soon as agents begin having informational parameters, $q < 1$, as is the case in a high throughput setting. To illustrate this phenomenon, we study a two-miner scenario with agents $m_0$ and $m_1$ of hash power $(1-h_1,h_1)$ and informational parameters $(q_0,q_1)$. $m_0$ is modelled as a non-atomic miner and we empirically compute the surplus average block reward of $m_1$ relative to the baseline $h_1$ they would receive in a fair protocol. Our results are visualised in Figure \ref{fig:highres_surplus}. Each row of the figure represents $k=1,2,3$ respectively and each column represents $q_0 = 0.005, 0.05, 0.2$. Each individual heatmap fixes $k$ and $q_0$ and plots average block reward surplus for $m_1$ as $q_1 \in [0,1]$ and $h_1 \in (0,0.5]$ are allowed to vary. Finally, each pixel contains the average block reward surplus for  $T=50$ and averaged over 50 trials. We notice that an added strength to our fairness result is that they hold, irrespective of the underlying honest transaction distribution $\mathcal{D}$ used in practice. 

The most jarring observation is that that for a large amount of parameter settings, $m_1$ earns a vastly different average block reward than their fair share $h_1$. In fact, for fixed $k$ and $q_0$, there seem to be three regions of the hash space $h_1 \in (0,0.5]$ with qualitatively distinct properties:
\begin{itemize}
    \item If $h_1$ is large enough, $m_1$ strictly benefits from having lower $q_1$ values. This is due to the fact that an honest miner with small $q_0$ necessarily sees his own blocks and is inadvertently acting somewhat ``selfishly''. Hence if their hash rate is high enough, their persistent mining upon their own blocks may end up orphaning other blocks and give them a higher share of valid blocks in the final DAG.
    \item If $h_1$ is small enough, $m_1$ strictly benefits from having higher $q_1$ values. Contrary to the previous point, at small hash values, $m_1$ only finds a few blocks, and hence they risk losing their entire share of blocks if these blocks aren't well positioned in the block DAG, since they are in no position to inadvertently overtake the entire DAG via pseudo-selfish behaviour resulting from low $q_1$ values. 
    \item Finally, for intermediate $h_1$ values, $m_1$ no longer has a monotonic surplus with respect to $q_1$ but rather a concave dependency. This can be seen as an interpolation of the previous two points. 
\end{itemize}
We notice that where these qualitative regions of $h_1$ values lie within $(0,0.5]$ depends entirely on $k$ and $q_0$. In general, for fixed $k$ (i.e specific rows within Figure \ref{fig:highres_surplus}), as $q_0$ increases, the transitions between these regions shift rightwards, and for fixed $q_0$ (i.e. specific columns in Figure \ref{fig:highres_surplus}) as $k$ increases, also shifts rightwards, as increasing $k$ can be seen to informally have the same effect as uniformly increasing $q_0$ and $q_1$ as agents are more likely to see blocks due to multiple pointers. 

\subsection{Hash Power / Information Tradeoff} We note that in the high throughput setting, miners may be faced with the choice of investing resources into increasing their connectivity to the underlying P2P network of the protocol or their hash power. The incentives behind such a decision are ultimately governed by how much it costs for $m_1$ to improve either $q_1$ or $h_1$. Figure \ref{fig:highres_surplus} clearly shows that such a decision is non-trivial. Roughly speaking ``small'' miners benefit from increasing their connectivity to the P2P network and ``large'' miners may even benefit more from having less connectivity to the P2P network!

Finally we would like to mention that for $k = \infty$ there is no point in plotting fairness as per our definitions, since such a value of $k$ automatically makes $\mathcal{P}_{f,k}$ fair. On the other hand, as we will see shortly, even when $k = \infty$, $\mathcal{P}_{f,k}$ protocols suffer from efficiency shortcomings at high enough transaction loads. 

\begin{figure}[h]
    \centering
    \includegraphics[width=.32\linewidth]{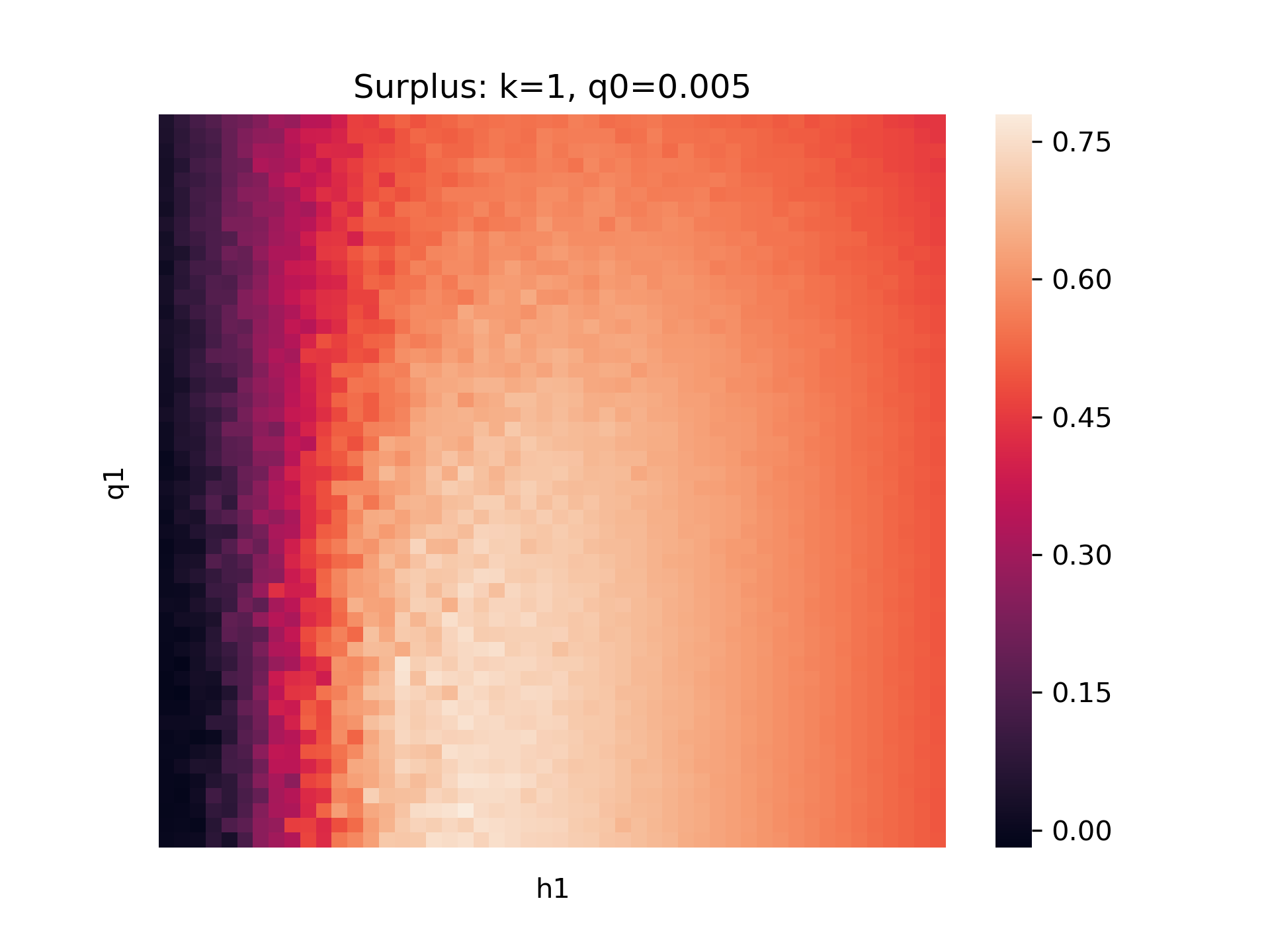}
    \includegraphics[width=.32\linewidth]{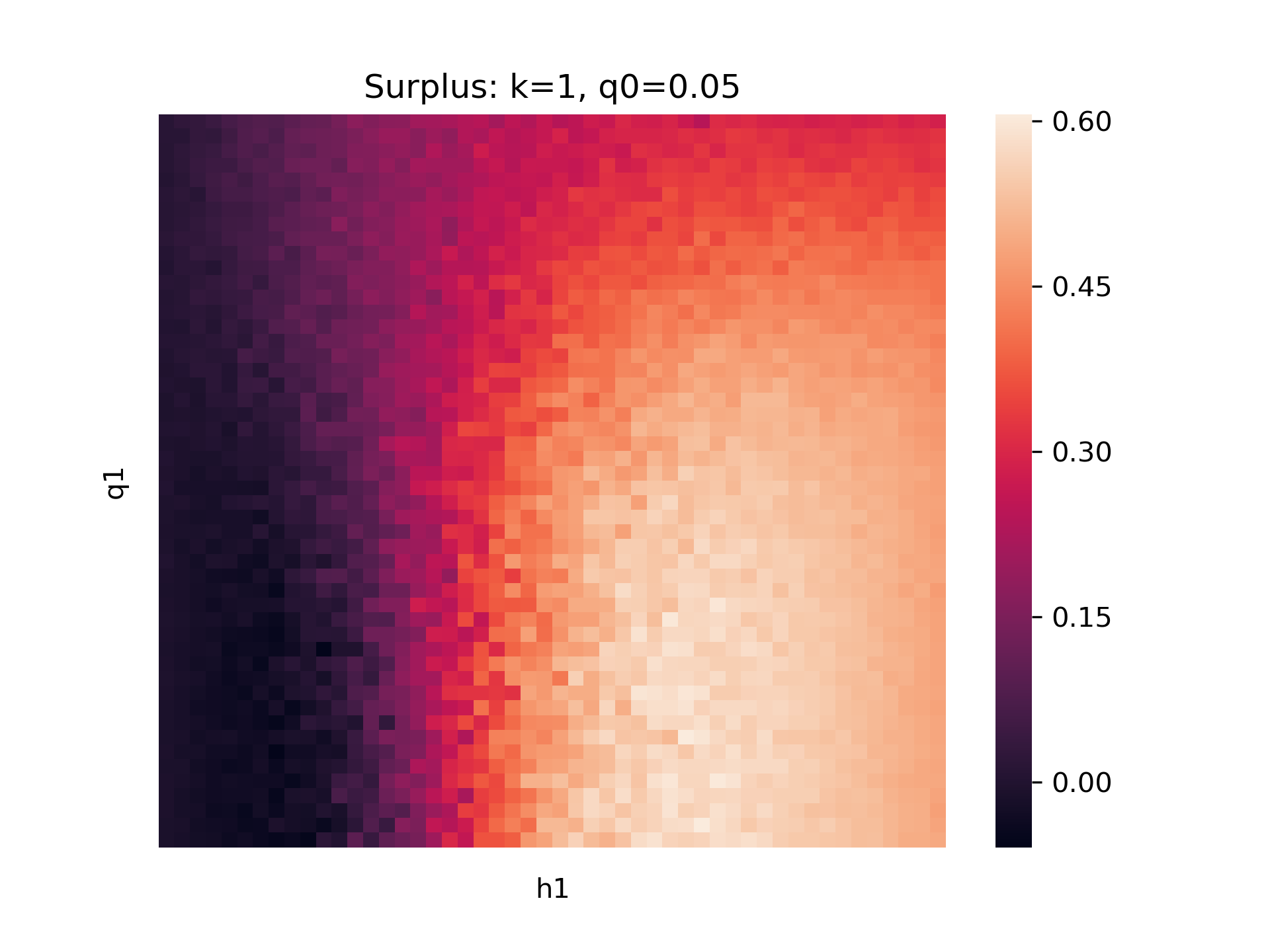}
    \includegraphics[width=.32\linewidth]{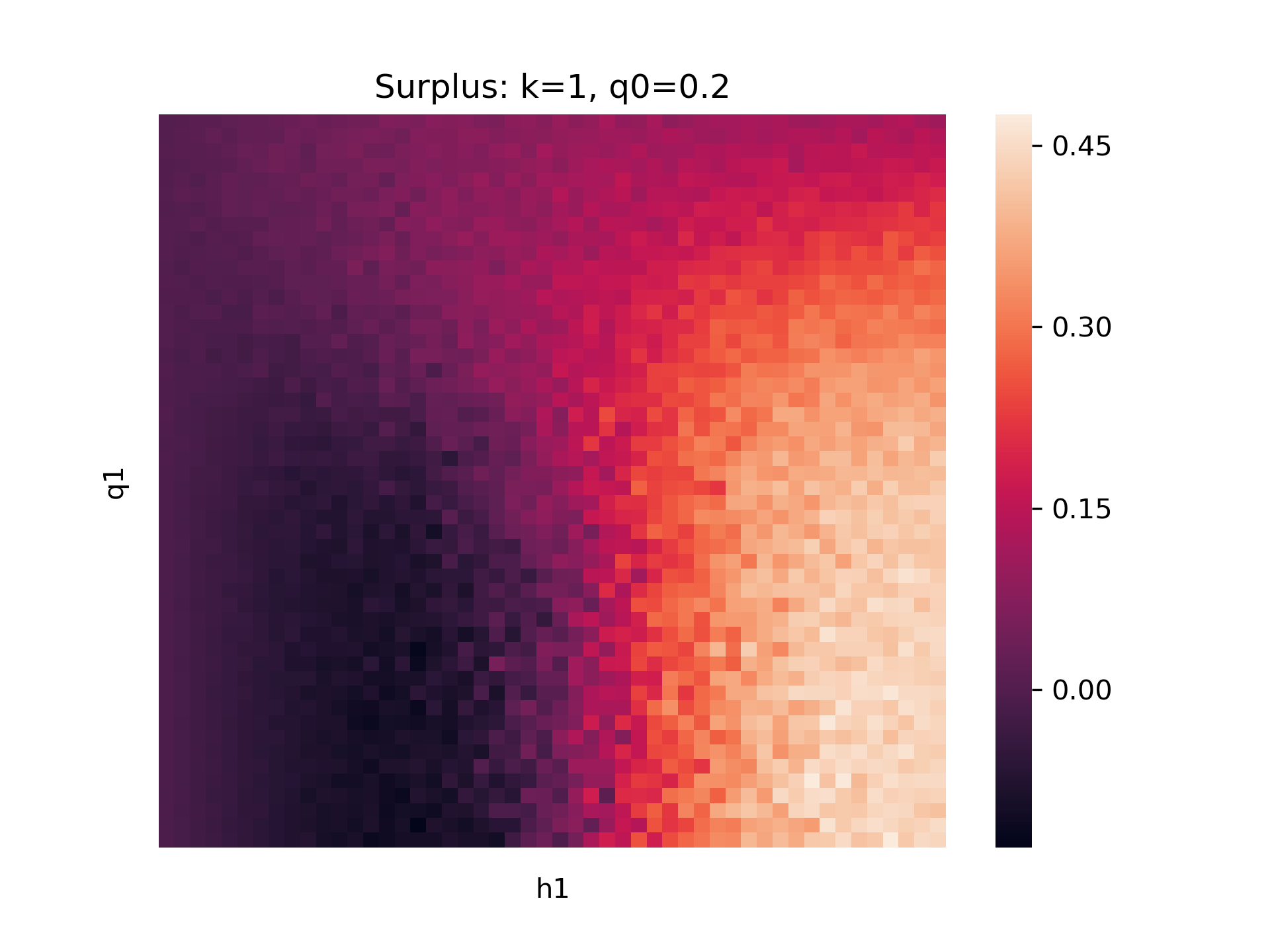}
    \includegraphics[width=.32\linewidth]{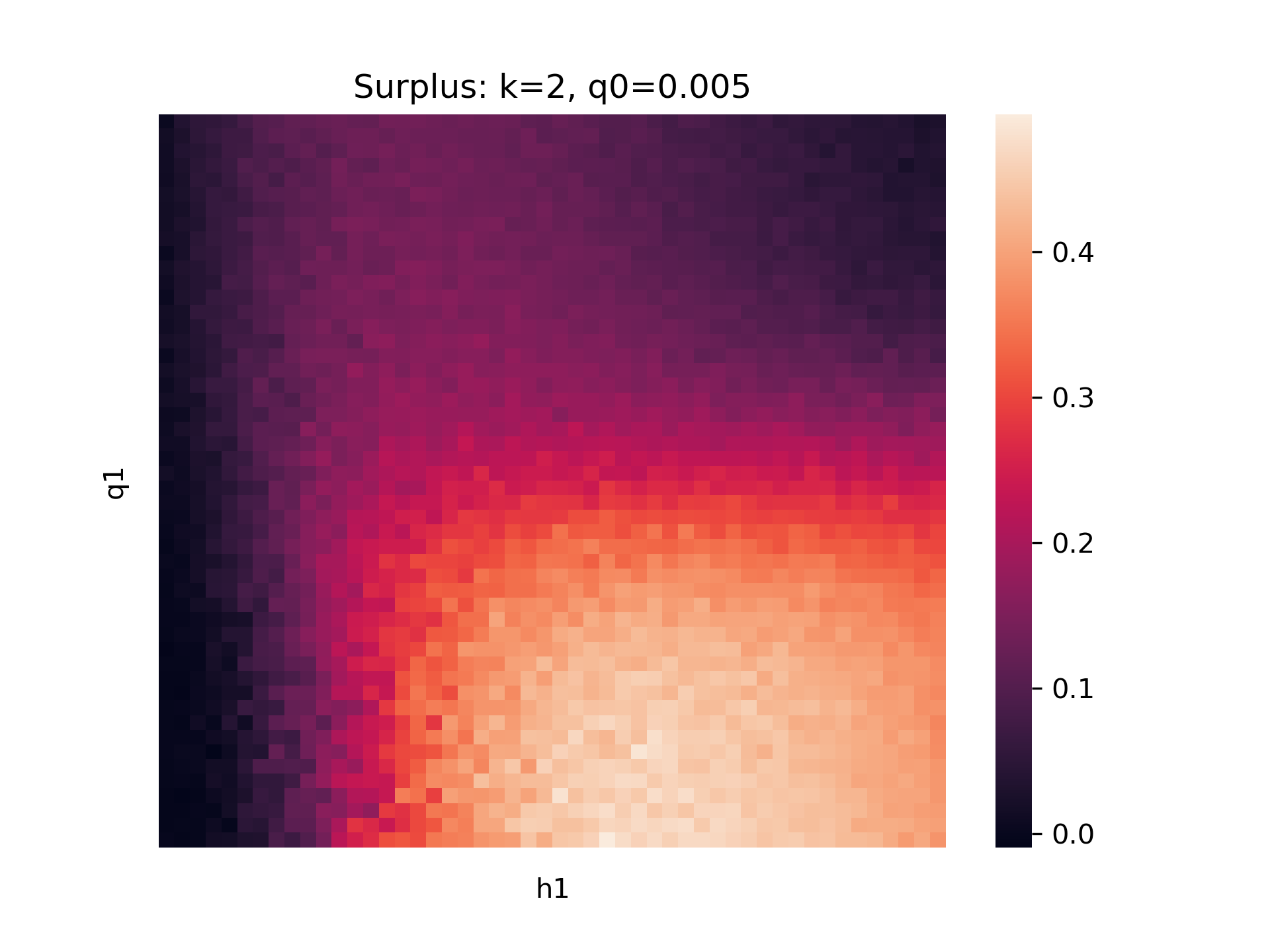}
    \includegraphics[width=.32\linewidth]{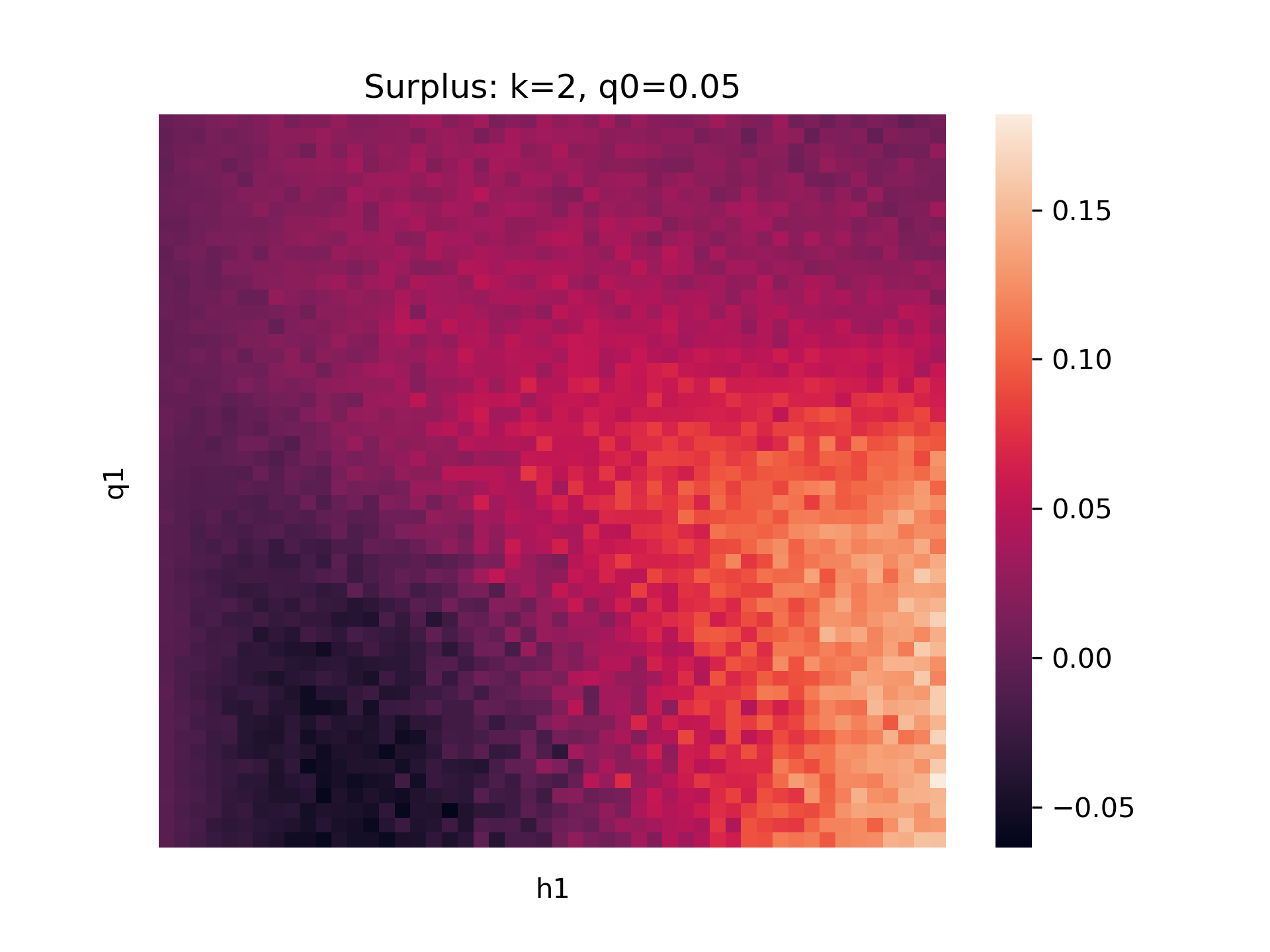}
    \includegraphics[width=.32\linewidth]{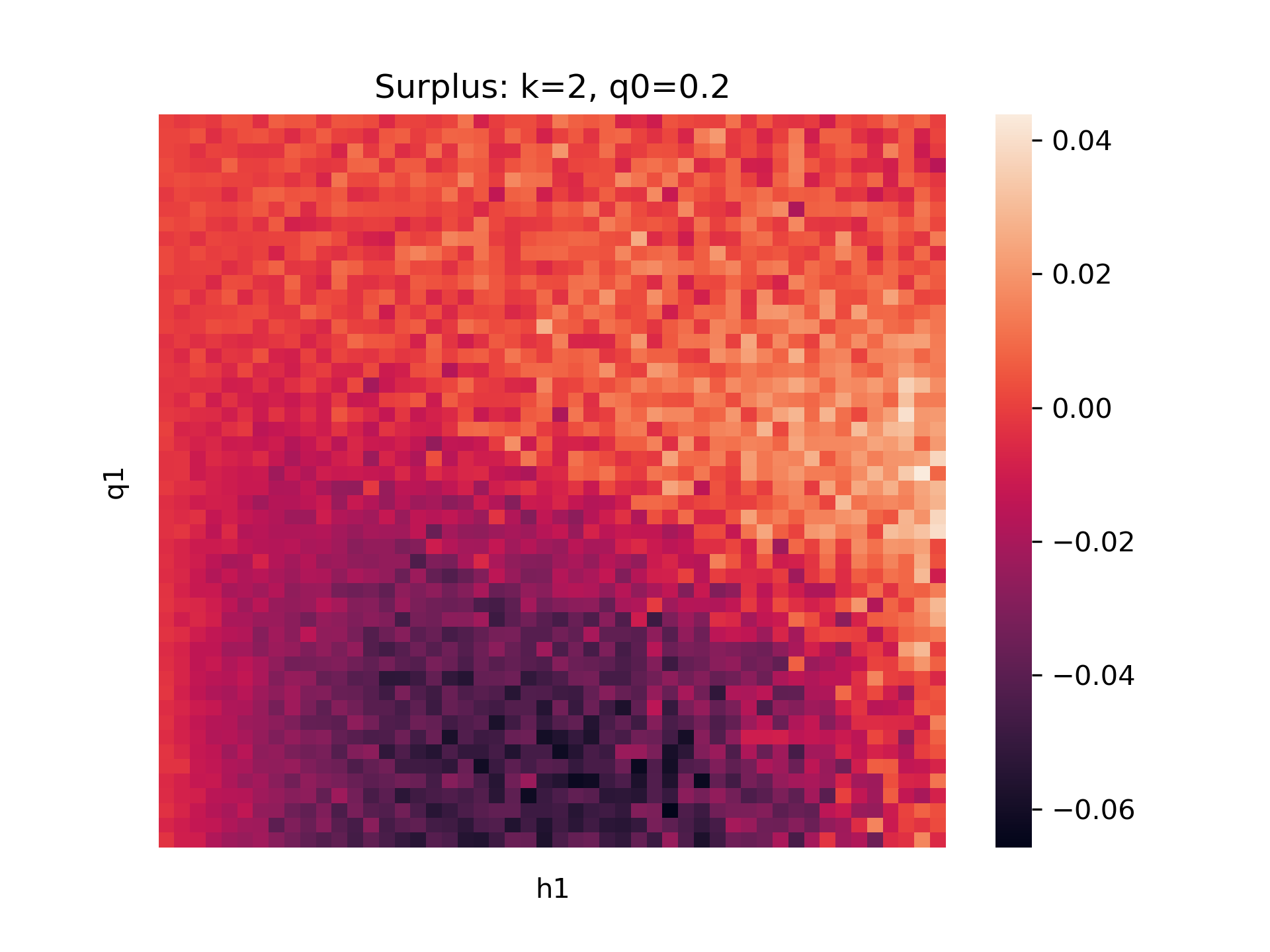}
    \includegraphics[width=.32\linewidth]{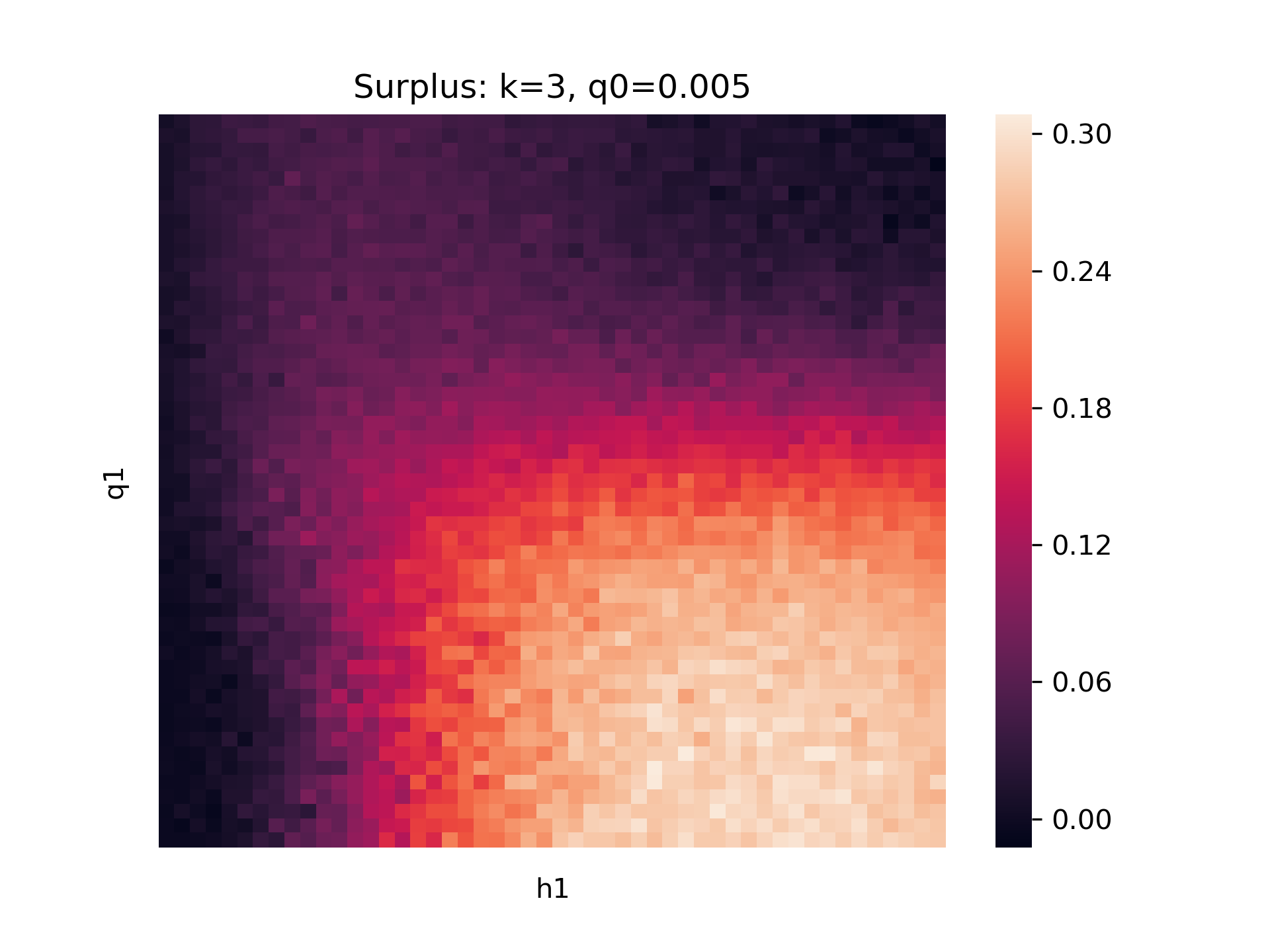}
    \includegraphics[width=.32\linewidth]{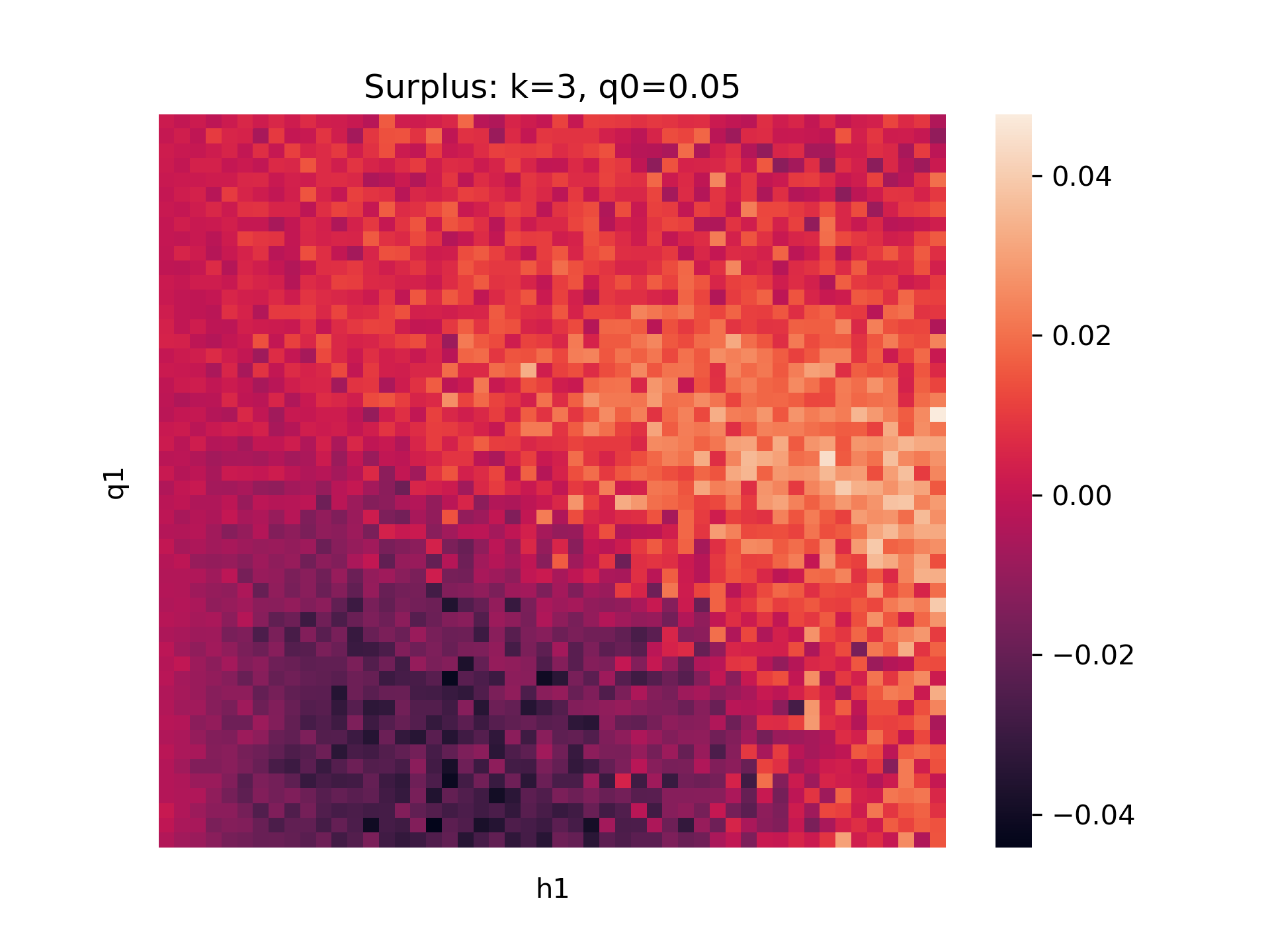}
    \includegraphics[width=.32\linewidth]{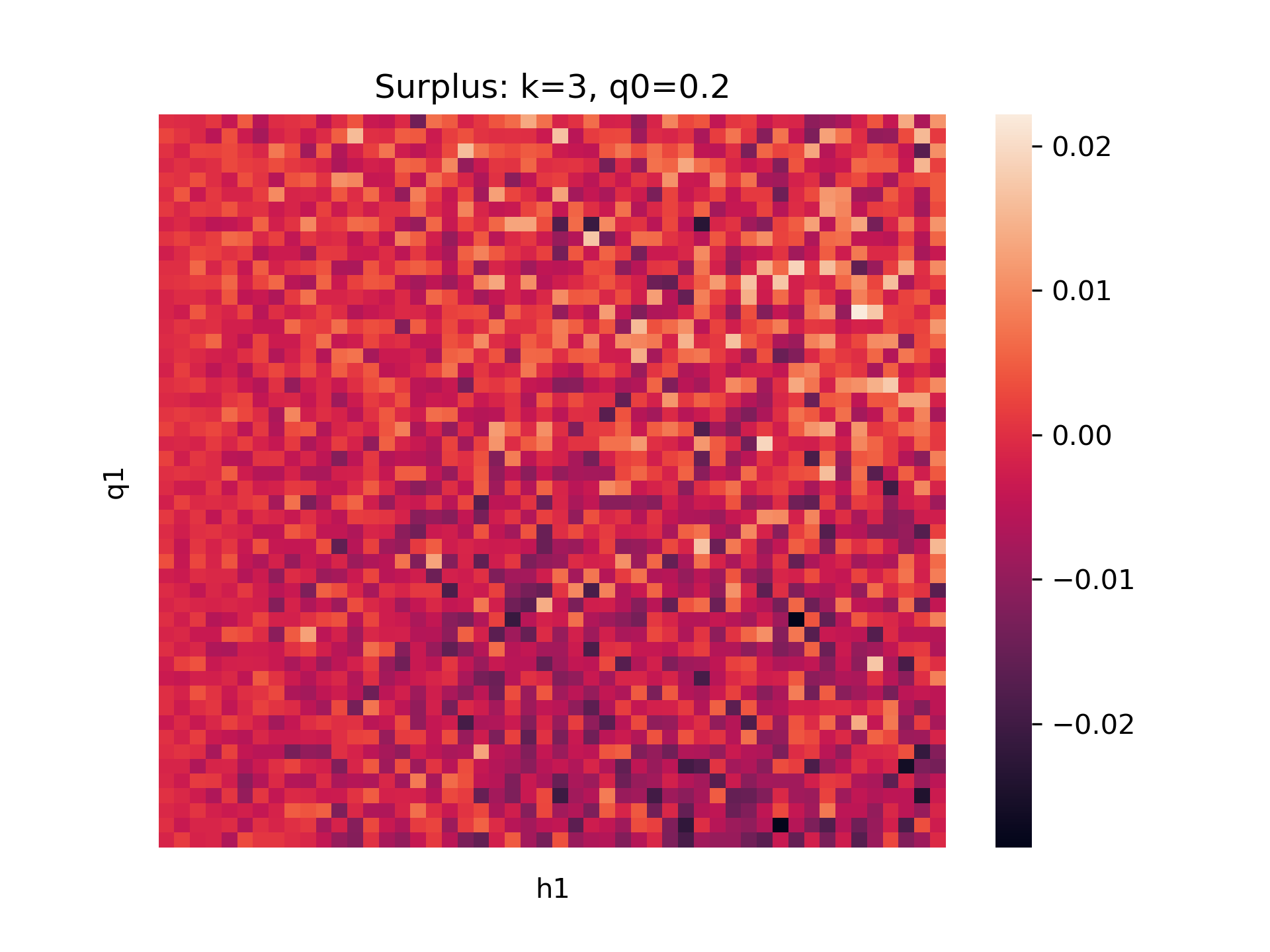}
    \caption{Surpluses for $k=1,2,3$ at $q_0 \in \{0.005, 0.05, 0.2\}$ }
    \label{fig:highres_surplus}
\end{figure}

\subsection{Ledger Efficiency}DAG-based ledgers have been created with the aim of tackling a higher transaction load in cryptocurrencies. Given that we have a way of modelling honest transaction growth, there are three different metrics we use to precisely quantify how well DAG-based ledgers deal with a higher throughput of transactions. The first and most important is the \emph{Proof of Work Efficiency}.
More specifically, for a given DAG-based Ledger, we say that the PoW efficiency is the fraction of globally valid transactions that are present within the valid sub DAG of the block DAG, over all published transactions.

This is the most important metric, since the goal of a ledger is to maximise the rate at which new transactions are processed. We also compare ledgers in terms of the average fraction of orphaned blocks they create and their transaction \emph{lag}, which is defined as the time difference between the issue and successful inclusion of the DAG's most recent transaction and the final turn of the time horizon.

For our experiments, we compared $\mathcal{P}_{f,k}$ performance for $k \in \{1, 2, 3, \infty\}$ and $n=4$ atomic miners each with $h_i = 1/4$ and varying $q_i$'s. 
For all graphs, we have $\eta =  6, T=100$ and the results have been averaged over 50 trials.
\begin{figure}[h]
    \centering
    \includegraphics[width=.34\linewidth]{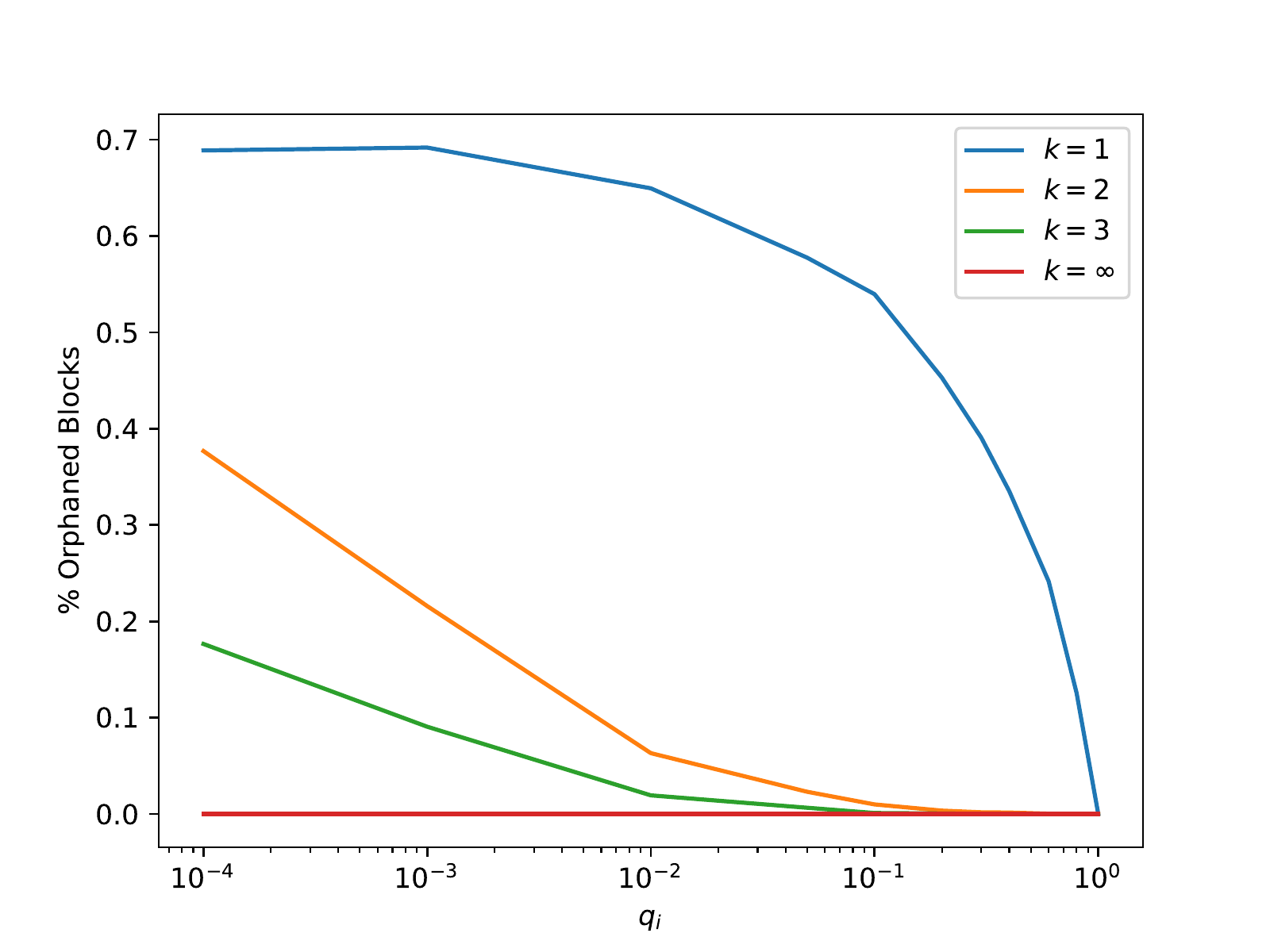}
    \includegraphics[width=.32\linewidth]{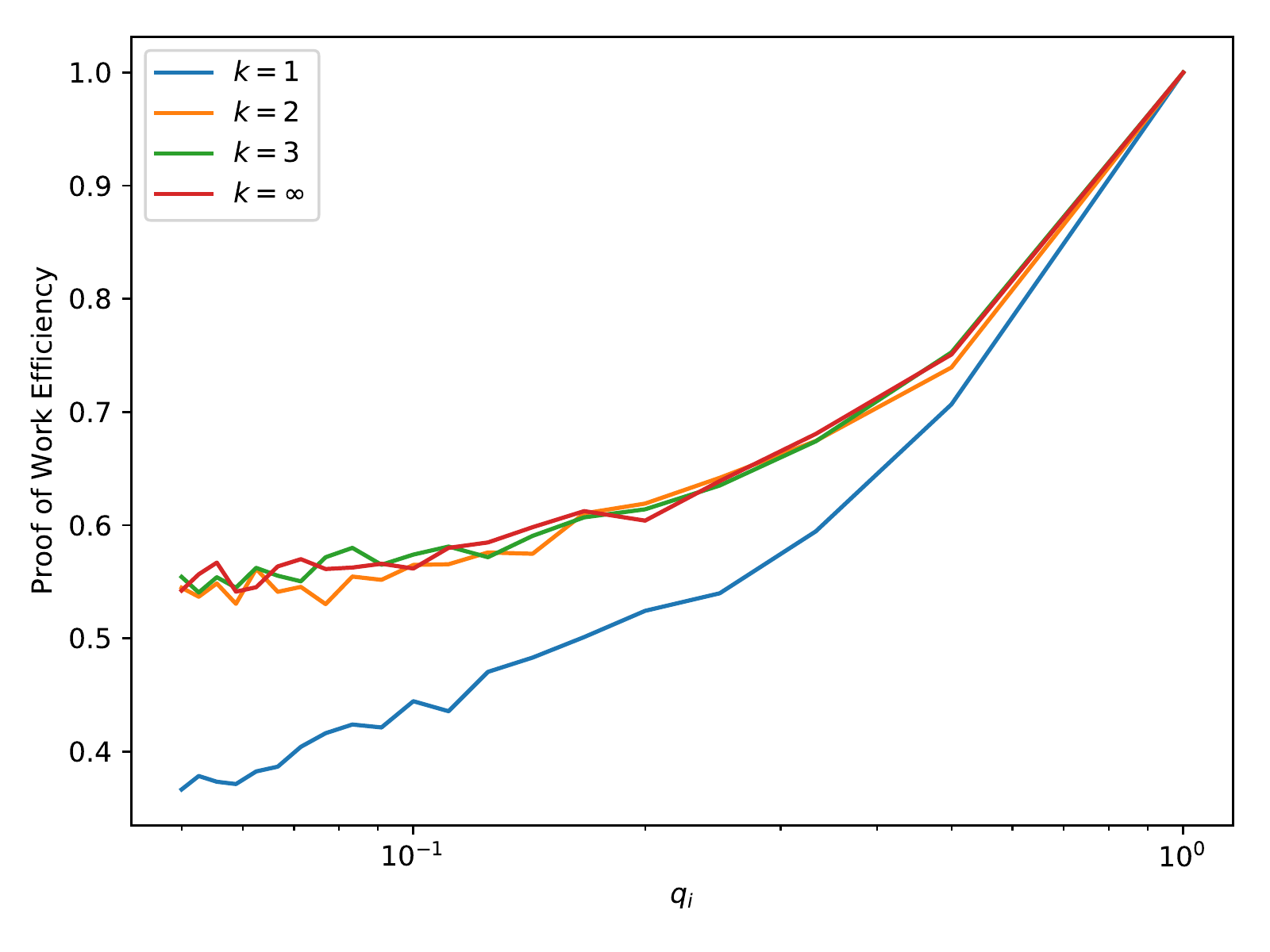}
    \includegraphics[width=.32\linewidth]{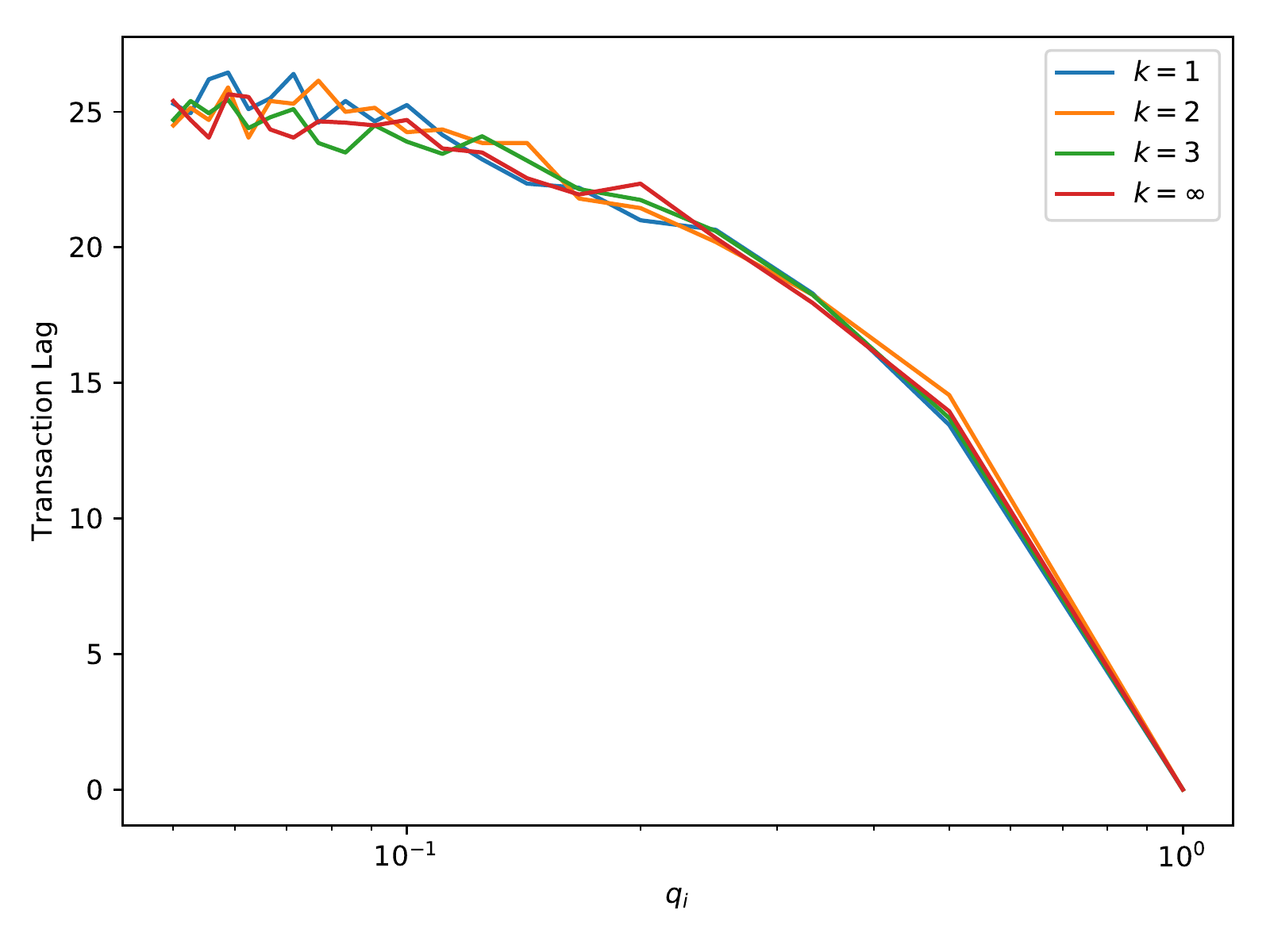}
    \caption{Performance Metrics for $n=4$ miners and $k \in \{1,2,3,\infty\}$}
    \label{fig:medres_variedN}
\end{figure}
First of all, we notice that for all parameter settings of $\mathcal{P}_{f,k}$, there exist information regimes where if each $q_i$ is low enough, the ledger suffers in its efficiency--even in the case where $k = \infty$. We also observe that increasing $k$ improves all metrics except lag, but not dramatically. For reasonable values of $q_i$, before fairness becomes an issue, there is a significant performance increase between $k=1$ and $k \ge 2$. However, $k>2$ is only really necessary for extremely small $q_i$. 

We also compared the performance for $n \in \{1, 2, \ldots, 20\}$ with $q = h = 1/n$, leading to similar results. Notably, as the number of miners grows the number of orphaned blocks decreases and the PoW efficiency improves with $k$.

\begin{figure}[h]
    \centering
    \includegraphics[width=.32\linewidth]{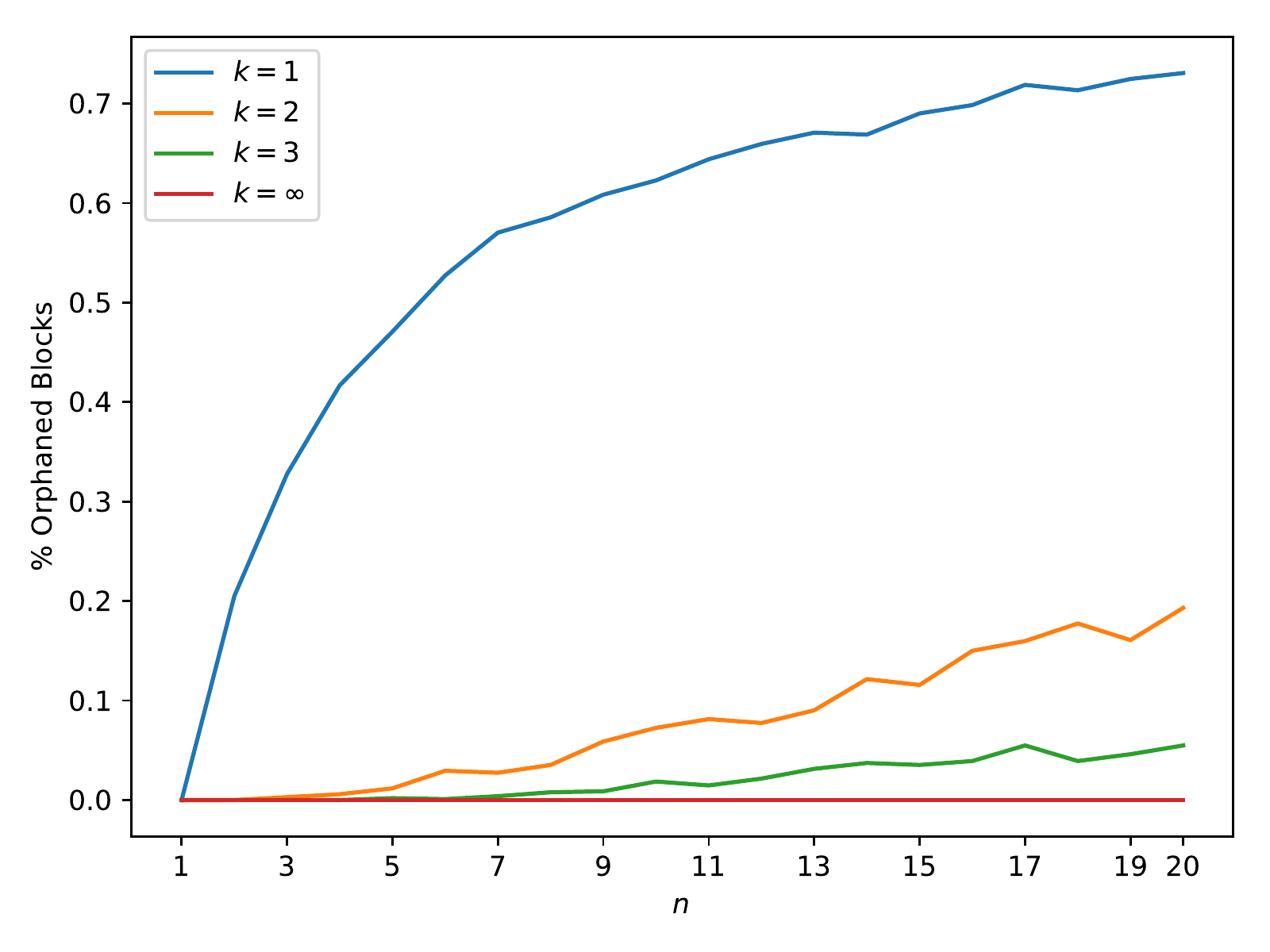}
    \includegraphics[width=.32\linewidth]{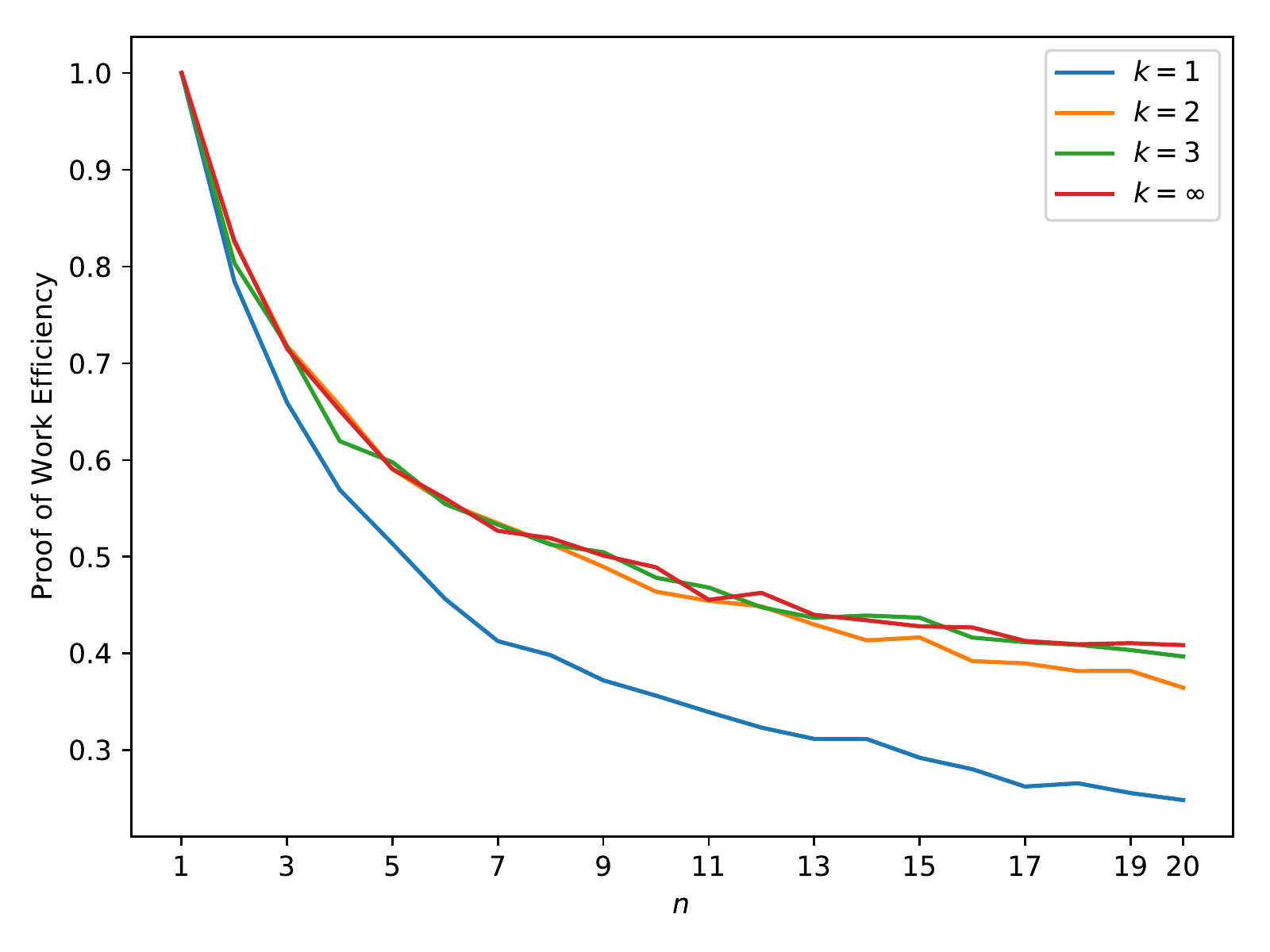}
    \caption{Performance Metrics for $n \in \{1, 2, \ldots, 20\}$ and $k \in \{1,2,3,\infty\}$}
    \label{fig:medres_variedN}
\end{figure}

\subsection{Dynamically adjusting $k$ and $f$} A key feature of Bitcoin is its dynamically adjusted difficulty. Our results suggest that a DAG-based ledger may also be able to dynamically adjust its internal parameters $k$ and $f$ to cope with changing transaction loads from end users. In fact, we see that high values of $k$ such as in SPECTRE do not provide much more of an added benefit to truncating the number of pointers at a smaller $k$. However, a dynamically adjusted protocol could sacrifice block size to make room for more pointers if efficiency is suffering in a period of high transaction loads to the ledger.

\section{Discussion / Future Directions}

Our results suggest inherent structural limitations to DAG-based ledgers. It remains to be seen whether the limitations formulated in this paper occur in practice. Given the generality of our DAG-based growth model, it would be interesting to study strategic considerations of $\mathcal{P}_{f,k}$ ledgers in their full generality, or augment the ledger space with a more complicated class of score functions for example.

In addition, it is known that there are other important limitations to Bitcoin. For example, the fact that every agent needs to keep a full copy of the ledger to make sure that validity is safe, or that PoW protocols result in an excessive use of energy resources around the globe. It would be interesting if the general approach of this paper could be applied to proposed solutions to these issues whereby one demonstrates inherent structural limitations in spite of all agents acting honestly.

\section{Acknowledgements}

Georgios Birmpas was supported by the European Research Council (ERC) under the advanced grant number 321171 (ALGAME) and the grant number 639945 (ACCORD). Philip Lazos was supported by the European Research Council (ERC) under the advanced grant number 321171 (ALGAME) and the advanced grant number 788893 (AMDROMA). Francisco J. Marmolejo-Coss\'io was supported by the Mexican National Council of Science and Technology (CONACyT). The authors would also like to thank Elizabeth Smith for the fruitful discussions during the preparation of this work.

\bibliographystyle{plainnat}
\bibliography{./refs.bib}

\end{document}